\documentclass{aims}
\usepackage{amsmath}
  \usepackage{paralist}
  \usepackage{graphics} 
  \usepackage{epsfig} 
\usepackage{graphicx}  \usepackage{epstopdf}
 \usepackage[colorlinks=true]{hyperref}
 \usepackage[utf8]{inputenc}
  \usepackage{subcaption}
  \usepackage{stackrel}
  \usepackage{changepage}
\hypersetup{urlcolor=blue, citecolor=red}

  \textheight=8.2 true in
   \textwidth=5.0 true in
    \topmargin 30pt
     \setcounter{page}{1}


\definecolor{boxback}{gray}{0.95}
\newtheorem{theorem}{Theorem}[section]

\newtheorem{lemma}[theorem]{Lemma}

\theoremstyle{definition}
\newtheorem{definition}[theorem]{Definition}
\newtheorem{remark}{Remark}

\newcommand{\eps}{\varepsilon}
\newcommand{\pP}{\mathbf{P}}

\newcommand{\R}{\mathbb{R}}

\title[Modularity of directed networks: cycle decomposition approach] 
      {Modularity of directed networks: cycle decomposition approach}

\author[Djurdjevac Conrad, Banisch and Schütte]{}

\subjclass{Primary: 60J20,05C81; Secondary: 94C15.}
 \keywords{directed networks, modules, cycle decomposition, measure of node communication}

 \email{djurdjev@mi.fu-berlin.de}
 \email{ralf.banisch@fu-berlin.de}
 \email{Christof.Schuette@fu-berlin.de}


\begin{document}
\maketitle

\centerline{\scshape Nata\v sa Djurdjevac Conrad}
\medskip
{\footnotesize
\centerline{Freie Universit\"at Berlin}
   \centerline{Department of Mathematics and Computer Science}
   \centerline{Arnimallee 6, 14195 Berlin, Germany}
   \centerline{Zuse Institute Berlin}
   \centerline{Takustrasse 7, 14195 Berlin, Germany}
} 

\medskip

\centerline{\scshape Ralf Banisch}
\medskip
{\footnotesize
 \centerline{Freie Universit\"at Berlin}
   \centerline{Department of Mathematics and Computer Science}
   \centerline{Arnimallee 6, 14195 Berlin, Germany}
}

\medskip

\centerline{\scshape Christof Schütte}
\medskip
{\footnotesize
\centerline{Freie Universit\"at Berlin}
   \centerline{Department of Mathematics and Computer Science}
   \centerline{Arnimallee 6, 14195 Berlin, Germany}
   \centerline{Zuse Institute Berlin}
   \centerline{Takustrasse 7, 14195 Berlin, Germany}
} 

\bigskip


\begin{abstract}
The problem of decomposing networks into modules (or clusters) has gained much attention in recent years, as it can account for a coarse-grained description of complex systems, often revealing functional subunits of these systems. A variety of module detection algorithms have been proposed, mostly oriented towards finding hard partitionings of undirected networks. Despite the increasing number of fuzzy clustering methods for directed networks, many of these approaches tend to neglect important directional information. In this paper, we present a novel random walk based approach for finding fuzzy partitions of directed, weighted networks, where edge directions play a crucial role in defining how well nodes in a module are interconnected. We will show that cycle decomposition of a random walk process connects the notion of network modules and information transport in a network, leading to a new, symmetric measure of node communication. Finally, we will use this measure to introduce a communication graph, for which we will show that although being undirected it inherits all necessary information about modular structures from the original network.
\end{abstract}

\section{Introduction}
In recent years complex networks have become widely recognized as a powerful abstraction of real-world systems \cite{Newman2003b}. The simple form of network representation allows an easy description of various systems, such as biological, technological, physical and social systems \cite{Newman04}. Thus, network analysis can be used as a tool to understand the structural and functional organization of real-world systems. One of the most important network substructures are \textbf{modules} (or clusters, community structures), that typically correspond to groups of nodes which are similar to each other in some sense. In the case of undirected networks, modules represent densely interconnected subgraphs having relatively few connections to the rest of the network \cite{Girvan02}. Finding modules leads to a coarse-grained description of network by smaller subunits, which in many cases have been found to correspond to functional units of real-world systems, such as protein complexes in biological networks \cite{
Chen2006}.\\
\noindent
Although a large number of different clustering algorithms \cite{Fortunato2010,Newman2003b} have been developed, most of them are oriented towards: (i) finding hard (or full, complete) partitions of networks into modules, where every node is assigned to exactly one module; (ii) analyzing undirected, unweighted networks. However, in many networks a {\em fuzzy partition}, where some nodes belong to several modules at once or to none at all, might be more natural. Moreover, many real-world networks are weighted and directed \cite{Barrat04,Newman2008,Malliaros2013} such as metabolic networks, email networks, World Wide Web, food webs etc. In such networks edge directions contain important information about the underlying system and as such should not be dismissed \cite{Fortunato2009}. So far, several approaches for finding modules in directed networks have been developed\cite{Lancichinetti11,Rosvall11,
Newman2008,Lambiotte2012}, see \cite{Malliaros2013} for an extensive overview. Among these, generalizations of modularity optimization became the most popular group of approaches \cite{Arenas2007,Newman2008,Kim2010}. We will refer to this family of methods as \textbf{generalized Newman-Girvan (gNG)} approaches. Despite their frequent application, modularity based methods are typically only sensitive to the density of links, but tend to neglect information encoded in the pattern of the directions \cite{Fortunato2009,Schaub2012}. This limitation of gNG methods (but also one-step methods and methods based on network symmetrization) is illustrated with the following example. \\

\begin{figure}[htp]
 \begin{subfigure}[b]{0.45\textwidth}
 \includegraphics[width=0.95\textwidth]{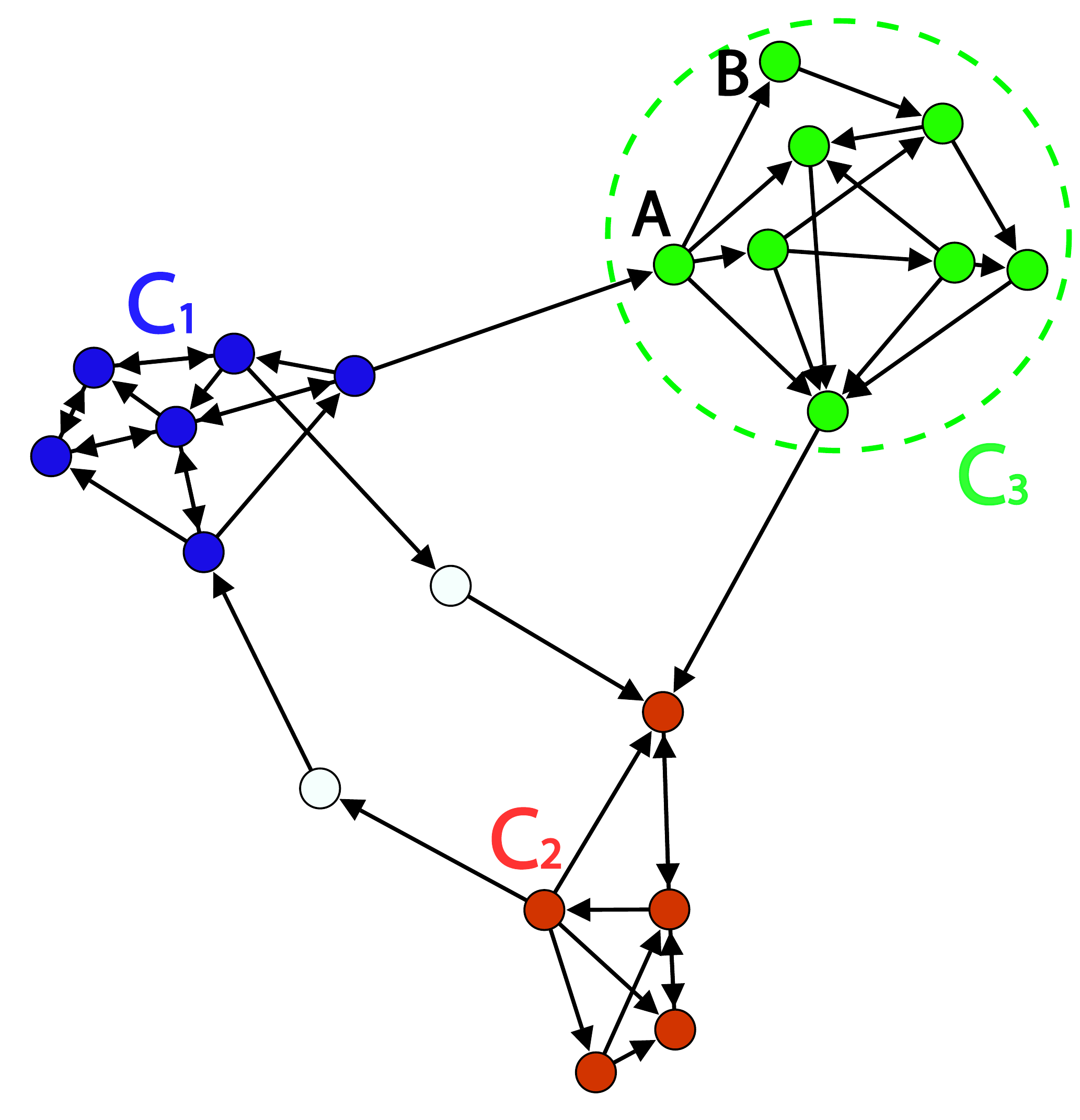}
 \caption{}
 \label{fig:marcoex_a}
 \end{subfigure}
 \begin{subfigure}[b]{0.45\textwidth}
 \includegraphics[width=0.95\textwidth]{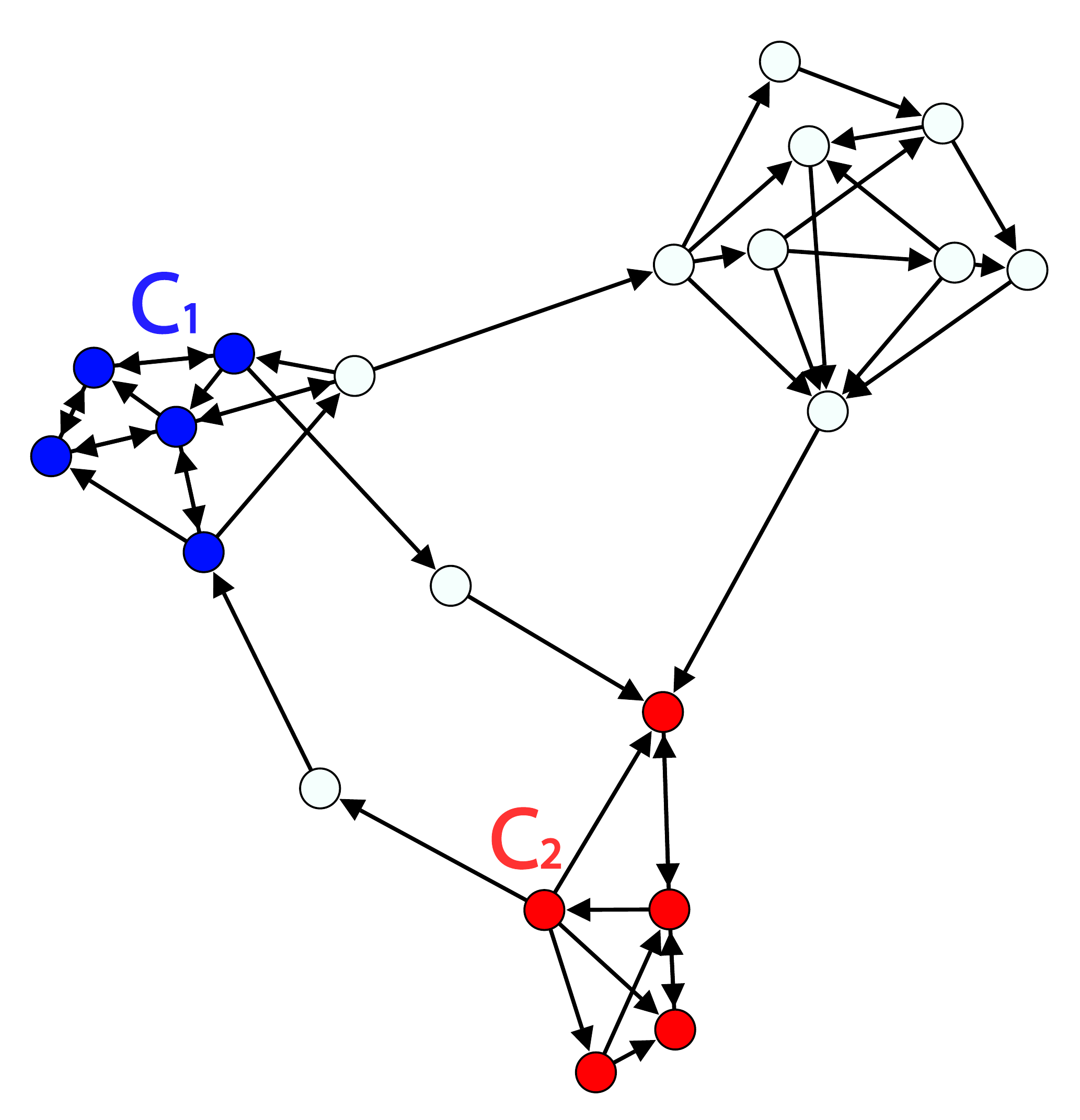}
 \caption{}
 \label{fig:marcoex_b}
 \end{subfigure}
  \caption{Illustration of a false module identification. Colors indicate modules found by (\subref{fig:marcoex_a}) gNG algorithm. (\subref{fig:marcoex_b}) our new algorithm.}
  \label{fig:marcoex_all}
\end{figure}
\noindent
Using a gNG approach, the network shown in Figure \ref{fig:marcoex_a} is partitioned into three modules $C_1$, $C_2$ and $C_3$ with high link density. However, $C_3$ is qualitatively different from $C_1$ and $C_2$: While in $C_1$ and $C_2$ edge directions are fairly symmetric, in $C_3$ they are completely asymmetric such that all nodes in $C_3$ are connected to each other by short paths in one direction and long paths in the other direction. For example, nodes $A, B\in C_3$ are connected by a directed edge $(AB)$, but in the direction from $B$ to $A$  they are connected only by long paths that pass through the whole network. In this paper, we will require nodes in a module to be close in both directions and therefore wish to reject a structure such as $C_3$ from being modular. The result of the algorithm we present here is shown in Figure \ref{fig:marcoex_b}.\\
\noindent
The main topic of this paper is on finding modules in directed, weighted networks, while explicitly respecting information encoded in the pattern of the directions. We will show that the notion of modules in directed networks is closely related to network cycles and described by dynamical properties of a random walk process, see Sections \ref{sec:flow} and \ref{sec:CycleRW}. We will make this connection precise in Section \ref{sec:cycledyn}, by using a cycle decomposition of this Markov process to introduce a novel measure of communication $I_{xy}$ between nodes. In Section \ref{sec:node_comm}, we will demonstrate that $I_{xy}$ indeed reflects information transport in the network and as such we will use $I_{xy}$ to introduce a communication graph which, although being undirected, inherits all necessary information about modular structures from the original network by construction. This will allow using some of the known fuzzy clustering approaches on the undirected communication graph, that will produce
modules of the original graph. In Section \ref{sec:Three} we will describe our new algorithmic approach and its relevant computational aspects. Finally, we will demonstrate our method on several examples in Section \ref{sec:Four} and compare it to gNG methods, indicating a possible improvement of modularity function.

\section{Theoretical background}\label{sec:Two}

\subsection{Random walks on directed networks}\label{sec:Theory}
We consider the weighted, directed network $G = (V;E)$, where $V$ is the set of nodes and $E$ the set of edges. We further assume that the network is strongly connected, i.e. there is a directed path from every node to every other node in the network. In particular, this means that the network has no sinks and sources. Now, we can define a time-discrete random walk process $(X_n)_n$ on the network with a transition matrix $P$
\begin{equation}
p_{xy} = \frac{K(x,y)}{K^+(x)},
\label{eq:RWmatrix}
\end{equation}
where $K(x,y)$ denotes the weight of an edge $(xy)\in E$ and $K^+(x) = \sum_y K(x,y)$ the weighted out-degree of a node $x$. Since the networks we consider are strongly connected, the introduced random walk process is ergodic and has a unique stationary distribution $\pi$ with $\pi^TP=\pi^T$. However, this process is not time-reversible and the detailed balance condition
\[
\pi_x p_{xy} = \pi_y p_{yx}, \quad x,y\in V
\]
doesn't hold. For this reason, most of the existing spectral methods fail to deal with finding modules in directed networks, as they are restricted to analysis of reversible processes.

\subsection{Cycle decomposition of flows}\label{sec:flow}
In this section we will briefly introduce the theory of cycle decompositions of Markov processes on finite state spaces as developed in \cite{Kalpazidou2006} and \cite{Qian2004}. We will use this decomposition to describe the random walk process defined by (\ref{eq:RWmatrix}) using cycles.\\
Let us introduce a \textbf{probability flow} $F: E\rightarrow \mathbb{R}$, such that $F_{xy} = \pi_xp_{xy}$ for every edge $(xy)\in E$. Then, the equation $\pi^T P = \pi^T$ of stationarity for $\pi$ directly yields conservation of the flow at every node $x$
\begin{equation}
\sum_y F_{yx} = \sum_y F_{xy}.
\label{eq:consFlow}
\end{equation}
Now we will generalize the notion of edge flows to cycle flows, where we define cycles in the following way
\begin{definition}
An $n$-cycle (or $n$-loop) on $G$ is defined as an ordered sequence \footnote{More precisely, cycles are equivalence classes of ordered sequences up to cycle permutations. In this note we do not distinguish between cycles and their representatives.} of $n$ connected nodes $\gamma = (x_1,x_2,\ldots, x_n)$, whose length we denote by $|\gamma| = n$. Cycles that do not have self-intersections are called \textbf{simple cycles} and we denote with $\mathcal{C}$ the collection of simple cycles in $G$.
\end{definition}
\noindent
In Figure \ref{fig:CycleEx} are shown examples of two simple cycles $\alpha$ and $\beta$, where $|\alpha| = 4$ and $|\beta| = 2$.\\

\begin{figure}[htp]
\begin{center}
 \includegraphics[width=\textwidth]{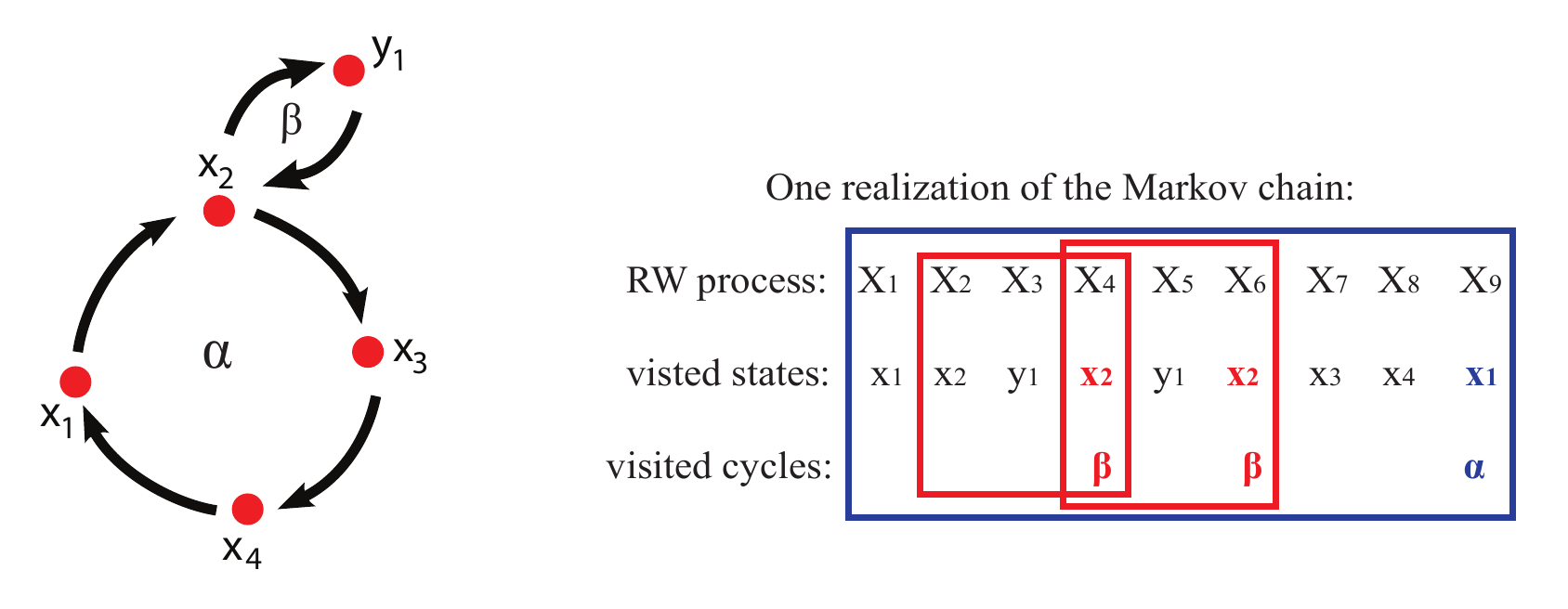}
   \caption{Examples of two simple cycles $\alpha$ and $\beta$ and one realization of the Markov chain $(X_n)_x$.}
  \label{fig:CycleEx}
    \end{center}
\end{figure}
\noindent
As a consequence of ergodicity and the flow conservation law (\ref{eq:consFlow}), we can find a cycle decomposition of $F$ into cycles. More precisely, there is a collection $\Gamma\subset\mathcal{C}$ of cycles $\gamma$ with real positive weights $F(\gamma)$ such that for every edge $(xy)$ the flow decomposition formula holds
\begin{equation}
F_{xy} = \sum_{\gamma\supset(xy)} F(\gamma),
\label{eq:flowdecomp}
\end{equation}
where we write $\gamma\supset (xy)$ if the edge $(xy)$ is in $\gamma$. Different approaches for finding cycle decompositions of Markov processes \cite{Schnakenberg76,Zia07,Kalpazidou2006,Timme12} have been developed. 
In this paper we will refer to a \textbf{stochastic decomposition approach} introduced in \cite{Kalpazidou2006,Qian2004}, as this approach leads to finding a unique cycle decomposition. Let $(X_n)_{1\leq n\leq T}$ be a realization of the Markov chain given by (\ref{eq:RWmatrix}).
We will say that $(X_n)_{1\leq n\leq T}$ \textbf{passes through the edge} $(xy)$ if there exists $k < T$ such that $X_k = x$ and $X_{k+1} = y$. The process $(X_n)_{1\leq n\leq T}$ \textbf{passes through a cycle} $\gamma$ if it passes through all edges of a cycle $\gamma$ in the right order, but not necessarily consecutively meaning that excursions through one or more full new cycles are allowed. This behavior is illustrated in Figure \ref{fig:CycleEx}.\\
Let $N_T^\gamma$ be the number of times $(X_n)_{1\leq n\leq T}$ passes through a cycle $\gamma$ up to time $T$. The limit
\begin{equation}
w(\gamma) := \lim_{T\rightarrow\infty} \frac{N_T^\gamma}{T}
\label{eq:WC}
\end{equation}
exists almost surely \cite{Qian2004}. Then, if the following \textbf{weights condition (WC)} holds
\begin{equation}
\mathbf{WC}:\; \Gamma = \{\gamma | w(\gamma)>0\}, \quad F(\gamma) = w(\gamma),
\end{equation}
we obtain a unique cycle decomposition $(\Gamma,w)$. In terms of the random walk process, the weight $w(\gamma)$ denotes the mean number of passages through a cycle $\gamma$ of almost all sample paths $(X_n)_{1\leq n\leq T}$. We will refer to $w(\gamma)$ as an importance weight for the cycle $\gamma$.\\
An explicit formula to calculate the weights $w(\gamma)$ was given in \cite{Qian2004}, which reads  $\gamma=(x_1,\ldots x_n)$
\begin{equation}
 w(\gamma) = \pi_{x_1}p_{x_1x_2}\ldots p_{x_nx_1} N_{x_1\ldots x_n}
 \label{eq:weightsformula}
\end{equation}
where the normalization factor $N_{x_1\ldots x_n}$ is a product of taboo Green's functions accounting for the excursions the random walk process can take while completing $\gamma$. It is hard to calculate $w(\gamma)$ in general, so we will use (\ref{eq:weightsformula}) for illustration purposes only, see \cite{Qian2004} for more details.\\
\begin{figure}[htp]
\begin{center}
 \includegraphics[width=\textwidth]{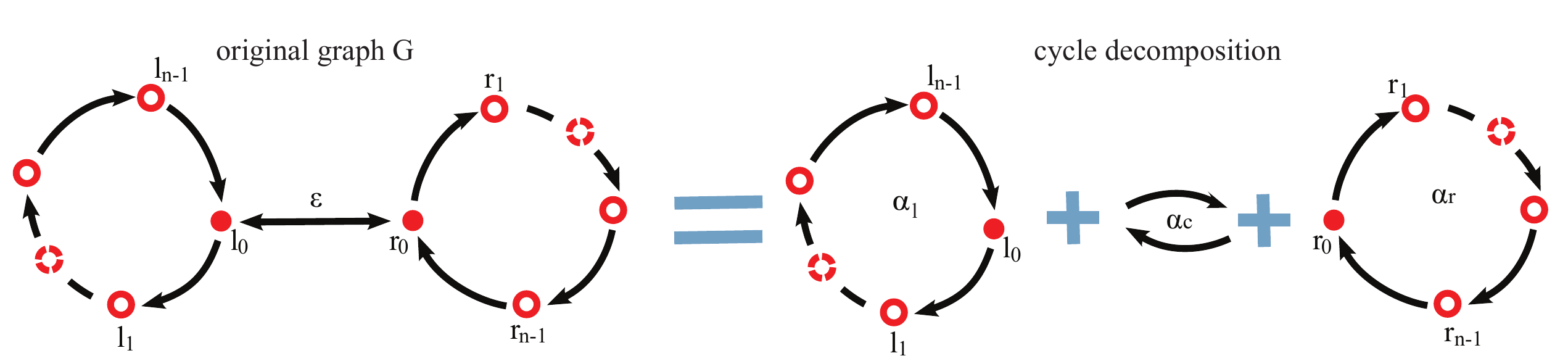}
   \caption{The barbell graph example network, consisting of two cycles with $n$ nodes joined by an edge with weight $\epsilon$. The cycle decomposition of the barbell graph consists of three cycles $\alpha_l$, $\alpha_c$ and $\alpha_r$.}
  \label{fig:barbell}
    \end{center}
\end{figure}
\noindent
As an example we will consider the barbell graph presented in Figure \ref{fig:barbell}. This is a weighted, directed graph consisting of two cycles $\alpha_l = (l_0, l_1, \ldots, l_{n-1})$ and $\alpha_r = (r_0, r_1,\ldots, r_{n-1})$ with $n$ nodes each and all edge weights are one. These two cycles are joined by an undirected edge $\alpha_c = (l_0, r_0)$ with edge weight $K(l_0,r_0) = K(r_0,l_0) = \eps<1$. Then, the probability of a standard random walk process (\ref{eq:RWmatrix}) leaving any of the two cycles at $l_0$ or $r_0$ is $p_{l_0 r_0} = \frac{\epsilon}{1+\epsilon} = p_{r_0 l_0}$. The central nodes have the same stationary distribution $\pi_{l_0} = \pi_{r_0} = \pi_c$ and for all other nodes we have $\pi_{l_i} = \pi_{r_i} = \pi_l$. Using the flow conservation property (\ref{eq:consFlow}) at one of the central nodes we get the equation $\pi_l + \pi_c\frac{\eps}{1+\eps} = \pi_c$ from which we can easily calculate
\[
\pi_l = \frac{1}{2(n+\eps)},\quad \pi_c = (1+\eps)\pi_l = \frac{1+\eps}{2(n+\eps)}.
\]
Since every edge belongs to exactly one of the three loops $\alpha_c = (l_0, r_0)$, $\alpha_l = (l_0, l_1, \ldots, l_{n-1})$ and $\alpha_r = (r_0, r_1,\ldots, r_{n-1})$, the weights of these loops can be inferred directly from (\ref{eq:flowdecomp}) as
\[
w(\alpha_l) = w(\alpha_r) = \frac{1}{2(n+\eps)}=: w, \quad w(\alpha_c) = \eps w.
\]

\subsection{Cycle-based random walks.}\label{sec:CycleRW}
In this section, we will use the stochastic cycle decomposition to introduce reversible processes that are a good approximation of the original non-reverisble process, in the sense that they contain all the important directional information needed for the module identification. This will allow us to construct graph transformations which will map a directed graph into an undirected graph, such that cycle node connections (i.e. if $x,y\in\alpha$) and the cycle importance weights are encoded into the structure of the new graph. Then, we will be able to reduce the original module finding problem on directed networks to a well understood problem of finding module partitions of undirected networks.\\
\noindent
We start by reformulating the original dynamics $(X_n)_n$ on vertices $V$ in terms of a dynamics on the cycle space $\Gamma$. Namely, instead of the dynamics of the standard random walk process with jumps from one node to another node, we will look at the random walk process with dynamics of the type node-cycle-node. To this end, we introduce
\begin{definition}
 The process $(\tilde X_n)_n$ is a stochastic process on $V$ generated by the following procedure:
\begin{enumerate}\label{def:cycle1}
 \item If $\tilde X_n=x$, then draw the next cycle $\xi_{n}$ according to the conditional probabilities
 \[
  \pP(\xi_n=\alpha|\tilde X_n=x) =: b_\alpha^{(x)} = \begin{cases}
                                                \frac{1}{\pi_x} w(\alpha), & x \in \alpha \\
                                                0 & x \notin \alpha.
                                              \end{cases}
 \]
 \item If $\xi_n = \alpha$, then follow the cycle $\alpha$ one step to obtain $\tilde X_{n+1} = y, (xy)\subset \alpha$.
\end{enumerate}
\end{definition}
\noindent
In this process, jumps from a node $x$ to a cycle $\alpha$ are probabilistic with probabilities $b_\alpha^{(x)}$ given by the cycle decomposition $\{\Gamma, w\}$. Using (\ref{eq:flowdecomp}) it follows that $\sum_{\alpha \supset x} w(\alpha) = \pi_x$, so that $b_\alpha^{(x)}$ form indeed conditional probabilities. Furthermore, $(\tilde X_n)_n$ is a Markov process with a transition matrix $P$
\begin{eqnarray*}
 \pP(\tilde X_{n+1} = y|\tilde X_n = x) & = & \sum_{\alpha\supset (xy)} \pP(\xi_n = \alpha | \tilde X_n = x) \\
 & = & \frac{1}{\pi_x} \sum_{\alpha\supset (xy)} w(\alpha) \\
 & \stackbin{\eqref{eq:flowdecomp}}{=} & \frac{1}{\pi_x} F_{xy} = p_{xy}.
\end{eqnarray*}
\noindent
Thus, $(X_n)_n$ and $(\tilde X_n)_n$ are statistically identical and we will refer to them as $(X_n)_n$ in the rest of the paper. As a byproduct we obtained the associated chain $(\xi_n)_n \subset \Gamma$ of cycles used by $(X_n)_n$. Notice that cycle-node jumps are deterministic and that $(\xi_n)_n$ is a non-Markovian process. To see this, consider the barbell graph in Figure \ref{fig:barbell}. We have $\pP(\xi_{n+2} = \alpha_r |\xi_{n+1} = \alpha_r, \xi_n = \alpha_c ) = 1$ since in this case $X_{n+2} = r_1$ with probability $1$. But $\pP(\xi_{n+2} = \alpha_r | \xi_{n+1} = \alpha_r) = 1 - \frac{1}{n}\frac{\eps}{1+\eps}$ as in this case $X_{n+2}=r_0$ with probability $1/n$, so that $X_{n+2} = \alpha_c$ is possible.\\
If $(X_n)_n$ is in equilibrium such that $\pP(X_n = x) = \pi_x$, then the chain $(\xi_n)_n$ is also in equilibrium, with distribution
\begin{eqnarray*}
 \mu(\alpha) &:= &\pP(\xi_n = \alpha) = \sum_x \pP(\xi_n = \alpha|X_n = x)\pP(X_n = x) = \sum_{x\subset \alpha} b_\alpha^{(x)} \pi_x \\
 & = &\sum_{x\subset \alpha} \frac{1}{\pi_x}w(\alpha)\pi_x = |\alpha|w(\alpha).
\end{eqnarray*}
Further assuming that $(X_n)_n$ is in equilibrium, the one-step transition matrix of $(\xi_n)_n$ is
\begin{equation}
 \mathcal{Q}_{\alpha\beta} := \pP\left(\xi_{n} = \beta| \xi_{n-1} = \alpha\right) = \sum_{x\in (\alpha\cap\beta)} \frac{1}{|\alpha|} b_\beta^{(x)},
 \label{eq:Ptilde}
\end{equation}
see Appendix for detailed calculation. Moreover, $\mathcal{Q}$ is detailed-balanced
\begin{equation}
 \mu(\alpha)\mathcal{Q}_{\alpha\beta} = \sum_{x\in (\alpha\cap\beta)}\frac{1}{\pi_x}w(\alpha)w(\beta) = \mu(\beta)\mathcal{Q}_{\beta\alpha}. \label{eq:loop_db}
\end{equation}
Thus, given a non-reversible Markov chain $(X_n)_n$ on $V$, we have constructed a reversible, but non-Markovian associated chain $(\xi_n)_n$ on the cycle space $\Gamma$. We will now modify Definition \ref{def:cycle1} to obtain two Markov chains $(Z_n)_n\subset V$ and $(\zeta_n)_n\subset\Gamma$. It will turn out later that $(\zeta_n)_n$ is the Markov approximation of $(\xi_n)_n$.
\begin{definition}
 The {\em cycle-node-cycle} process $(\zeta_n)_n \subset \Gamma$ and the {\em node-cycle-node} process $(Z_n)_n\subset V$ are stochastic processes generated by the following procedure:
 \begin{enumerate}
 \item If $Z_n=x$, then draw the next cycle $\zeta_n$ according to the conditional probabilities
 \[
 \pP(\zeta_n=\alpha|Z_n=x) = b_\alpha^{(x)}.
 \]
 \item If $\zeta_n = \alpha$, then draw $Z_{n+1} = y$ with probability
 \[
   \pP(Z_{n+1} = y|\zeta_n=\alpha) := \begin{cases}
                                      \frac{1}{|\alpha|}, & y\in \alpha \\
                                      0, & y\notin \alpha.
                                     \end{cases}
 \]
\end{enumerate}
\end{definition}
\noindent
Jumps node-cycle are again probabilistic, but now also jumps from a cycle to a node are probabilistic because from a cycle $\alpha$ the process can jump with uniform probability to any of the nodes in this cycle. Both processes $(Z_n)_n$ and $(\zeta_n)_n$ are Markov chains and we will now look at their transition properties. 
To simplify the notation we introduce the two stochastic matrices $\mathcal{B}: \R^\Gamma \rightarrow \R^V$ and $\mathcal{V}: \R^V \rightarrow \R^\Gamma$ with components
\begin{eqnarray}
 \mathcal{B}_{x\alpha} & = & \pP(\zeta_n=\alpha|Z_n=x) = b_{\alpha}^{(x)}, \notag\\
 \mathcal{V}_{\alpha y} & = & \pP(Z_{n+1} = y|\zeta_n=\alpha) = \frac{1}{|\alpha|}\delta(y\in \alpha). \label{eq:BandV}
\end{eqnarray}
Let $\pi_n(x) := \pP(Z_n=x)$ and $\mu_n(\alpha) := \pP(\zeta_n = \alpha)$ be the probability distributions of $(Z_n)_n$ and $(\zeta_n)_n$. Then the above definition implies the following propagation scheme for $\pi_n$ and $\mu_n$
\begin{equation}
 \pi_n \stackrel{\mathcal{B}^T}{\longrightarrow} \mu_n \stackrel{\mathcal{V}^T}{\longrightarrow} \pi_{n+1} \stackrel{\mathcal{B}^T}{\longrightarrow} \mu_{n+1} \stackrel{\mathcal{V}^T}{\longrightarrow} \ldots
 \label{eq:propscheme}
\end{equation}
The next lemma gives the explicit expressions for the transition matrices of $(\zeta_n)_n$ and $(Z_n)_n$ in terms of $\mathcal{B}$ and $\mathcal{V}$ 

 \begin{lemma}
  Let $\mathcal{P}$ be the transition matrix of $(Z_n)_n$ and $\mathcal{Q}$ be the transition matrix of $(\zeta_n)_n$. Then $\mathcal{P} = \mathcal{B}\mathcal{V}$ and $\mathcal{Q} = \mathcal{V}\mathcal{B}$ and
  \begin{eqnarray}
   \mathcal{P}_{xy} & = & \frac{1}{\pi_x}\sum_{\alpha\ni x,y} \frac{w(\alpha)}{|\alpha|} \label{eq:P} \\
   \mathcal{Q}_{\alpha\beta} & = & \sum_{x\in (\alpha\cap\beta)} \frac{1}{|\alpha|} b_\beta^{(x)}. \label{eq:Q}
  \end{eqnarray}
  \end{lemma}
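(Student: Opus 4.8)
The plan is to derive both matrix identities from the law of total probability, conditioning on the intermediate state of the alternating chain, and then to read off the explicit formulas by substituting the definitions \eqref{eq:BandV} of $\mathcal{B}$ and $\mathcal{V}$. Before computing transition matrices at all, I would first note \emph{why} $(Z_n)_n$ and $(\zeta_n)_n$ are genuinely Markov. The interleaved sequence $(Z_n,\zeta_n,Z_{n+1},\zeta_{n+1},\ldots)$ is a single Markov chain on the bipartite state space $V\sqcup\Gamma$, since in the generating procedure every transition depends only on the immediately preceding state: a node $x$ determines the law of the next cycle through $b_\alpha^{(x)}$, and a cycle $\alpha$ determines the law of the next node through $\frac{1}{|\alpha|}$. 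Observing this bipartite chain at every second step—on $V$ to obtain $(Z_n)_n$ and on $\Gamma$ to obtain $(\zeta_n)_n$—produces Markov chains, which legitimizes speaking of the transition matrices $\mathcal{P}$ and $\mathcal{Q}$.

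For $\mathcal{P}=\mathcal{B}\mathcal{V}$, I would condition on the cycle $\zeta_n$ drawn between $Z_n=x$ and $Z_{n+1}=y$:
\[
\mathcal{P}_{xy}=\pP(Z_{n+1}=y\mid Z_n=x)=\sum_\alpha \pP(Z_{n+1}=y\mid\zeta_n=\alpha)\,\pP(\zeta_n=\alpha\mid Z_n=x)=\sum_\alpha \mathcal{B}_{x\alpha}\mathcal{V}_{\alpha y},
\]
which is exactly $(\mathcal{B}\mathcal{V})_{xy}$. Substituting $\mathcal{B}_{x\alpha}=b_\alpha^{(x)}$ and $\mathcal{V}_{\alpha y}=\frac{1}{|\alpha|}\delta(y\in\alpha)$, and recalling that $b_\alpha^{(x)}=\frac{1}{\pi_x}w(\alpha)$ vanishes unless $x\in\alpha$, the two indicator constraints collapse the sum onto cycles containing both $x$ and $y$, giving $\mathcal{P}_{xy}=\frac{1}{\pi_x}\sum_{\alpha\ni x,y}\frac{w(\alpha)}{|\alpha|}$, i.e.\ \eqref{eq:P}. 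Symmetrically, for $\mathcal{Q}=\mathcal{V}\mathcal{B}$ I would condition on the node $Z_n$ drawn between $\zeta_{n-1}=\alpha$ and $\zeta_n=\beta$:
\[
\mathcal{Q}_{\alpha\beta}=\sum_x \pP(\zeta_n=\beta\mid Z_n=x)\,\pP(Z_n=x\mid\zeta_{n-1}=\alpha)=\sum_x \mathcal{V}_{\alpha x}\mathcal{B}_{x\beta}=(\mathcal{V}\mathcal{B})_{\alpha\beta},
\]
and after substitution the factor $\frac{1}{|\alpha|}\delta(x\in\alpha)$ together with the support of $b_\beta^{(x)}$ restricts the sum to $x\in\alpha\cap\beta$, yielding \eqref{eq:Q}.

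The computation itself is short; the genuine points of care are bookkeeping points rather than deep obstacles. The first is getting the order of the product right—$\mathcal{P}=\mathcal{B}\mathcal{V}$ versus $\mathcal{Q}=\mathcal{V}\mathcal{B}$—which is dictated by whether the hidden intermediate state of the two-step move is a cycle (for $\mathcal{P}$) or a node (for $\mathcal{Q}$), and keeping track of which indicator truncates which summation index. The second, more conceptual point is recognizing that because the bipartite chain is Markov, the conditional probabilities appearing in the total-probability expansions are \emph{exactly} the entries of $\mathcal{B}$ and $\mathcal{V}$, independent of the current distribution of the process; consequently no stationarity assumption is needed for this Lemma, in contrast to the reversibility and equilibrium computations such as \eqref{eq:loop_db}.
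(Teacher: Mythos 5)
Your proposal is correct and follows essentially the same route as the paper: both factor the one-step transition through the hidden intermediate state and then substitute the definitions of $\mathcal{B}$ and $\mathcal{V}$ from \eqref{eq:BandV}, the only cosmetic difference being that the paper phrases the factorization at the level of distributions via the Kolmogorov forward equation and the propagation scheme \eqref{eq:propscheme}, whereas you phrase it at the level of transition kernels via the law of total probability. Your additional observation that the interleaved bipartite chain on $V\sqcup\Gamma$ is Markov (so that $\mathcal{P}$ and $\mathcal{Q}$ are well defined without any stationarity assumption) is a sound point of care that the paper leaves implicit.
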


\begin{proof}
By the Kolmogorov forward equation $\pi_{n+1} = \mathcal{P}^T\pi_n$, which in view of (\ref{eq:propscheme}) immediately implies $\mathcal{P} = \mathcal{B}\mathcal{V}$, and similarly we get $\mathcal{Q} = \mathcal{V}\mathcal{B}$. On the other hand, using (\ref{eq:BandV}):
\[
 \left(\mathcal{V}\mathcal{B}\right)_{\alpha\beta} = \sum_{x\in \alpha\cap\beta}b_\beta^{(x)} \frac{1}{|\alpha|}, \quad (\mathcal{B}\mathcal{V})_{xy} = \frac{1}{\pi_x}\sum_{\alpha\ni x,y} \frac{w(\alpha)}{|\alpha|}.
\]
Since $\mathcal{B}$ and $\mathcal{V}$ are stochastic, then $\mathcal{P}$ and $\mathcal{Q}$ are stochastic matrices too.
\end{proof}

\begin{table}[h]
\resizebox{\textwidth}{!}{\begin{minipage}{\textwidth}
 \begin{adjustwidth}{0.5cm}{}
    \begin{tabular}{ | c | c | c | c | c |}
    \hline
    \textbf{Process} & $\mathbf{(X_n)_n}$ & $\mathbf{(\xi_n)_n}$ & $\mathbf{(Z_n)_n}$ & $\mathbf{(\zeta_n)_n}$\\ \hline\hline
    Properties & non-reversible & reversible & reversible & reversible\\
            & Markov & non-Markov & Markov & Markov\\ \hline
    State space & $V$ & $\Gamma$ & $V$ & $\Gamma$\\ \hline
    Transition matrix & $P$ & $\mathcal{Q}$ & $\mathcal{P}$ & $\mathcal{Q}$\\ \hline
    Stationary distribution & $\pi$ & $\mu$ & $\pi$ & $\mu$\\ \hline
    \end{tabular}
\caption[Table caption text]{Properties of the introduced random walk processes.}
\label{table:tableP}
 \end{adjustwidth}
\end{minipage} }
\end{table}
\noindent
By comparing with (\ref{eq:Ptilde}), we can see that the one-step transition matrices of $(\zeta_n)_n$ and $(\xi_n)_n$ are the same, so that $(\zeta_n)_n$ is the Markov approximation of $(\xi_n)_n$. Thus, our construction produced a Markov process $(Z_n)_n$ on the space of graph nodes $V$ with transition matrix $\mathcal{P}$ and a closely related Markov process $(\zeta_n)_n$ on the space of graph cycles $\Gamma$ with the transition matrix $\mathcal{Q}$. Table \ref{table:tableP} shows the main properties of the introduced random walk processes.
To establish a more precise relation between these processes, we will examine the spectral properties of $\mathcal{P}$ and $\mathcal{Q}$. \\
\noindent
Let $D_\pi = \mbox{diag}(\pi)$ and $D_\mu = \mbox{diag}(\mu)$ with $\mu(\alpha) = |\alpha|w(\alpha)$ as before. We equip $\R^V$ and $\R^\Gamma$ with the scalar products $\langle u, v\rangle_\pi := u^T D_\pi v$ and $\langle u,v\rangle_\mu := u^T D_\mu v$. Then, Lemma \ref{lemma:PQprop} shows that $(Z_n)_n$ and $(\zeta_n)_n$ are two time-reversible Markov processes with identical spectral properties.

\begin{lemma}\label{lemma:PQprop}
The transition matrices $\mathcal{P}$ and $\mathcal{Q}$ of processes $(Z_n)_n$ and $(\zeta_n)_n$ have the following properties:
 \begin{enumerate}
  \item[(i)] $\pi$ is the stationary distribution of $\mathcal{P}$ and $\mu$ is the stationary distribution of $\mathcal{Q}$.
  \item[(ii)] $(Z_n)_n$ and $(\zeta_n)_n$ are time-reversible processes, i.e. $\langle u,\mathcal{P}v\rangle_\pi = \langle\mathcal{P} u,v\rangle_\pi$ and $\langle u,\mathcal{Q}v\rangle_\mu = \langle\mathcal{Q} u,v\rangle_\mu$.
  \item[(iii)] The non-zero spectrum of $\mathcal{Q}$ is identical to the non-zero spectrum of $\mathcal{P}$, i.e. $\sigma(\mathcal{Q})\setminus\{0\} = \sigma(\mathcal{P})\setminus\{0\}$.
  \item[(iv)] For $\lambda \in \sigma(\mathcal{P})\setminus\{0\}$, $\dim\ker (\mathcal{P} - \lambda 1) = \dim\ker (\mathcal{Q} - \lambda 1)$. If $v \in \ker (\mathcal{Q} - \lambda 1)$, then $\mathcal{B}v\in \ker(\mathcal{P} - \lambda 1)$. If $v \in \ker (\mathcal{P} - \lambda 1)$, then $\mathcal{V}v\in \ker(\mathcal{Q} - \lambda 1)$
 \end{enumerate}
\end{lemma}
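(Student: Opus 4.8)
The plan is to dispatch (i) and (ii) together by reading detailed balance off the explicit formulas (\ref{eq:P}) and (\ref{eq:Q}), and then to obtain (iii) and (iv) from the factorizations $\mathcal{P} = \mathcal{B}\mathcal{V}$ and $\mathcal{Q} = \mathcal{V}\mathcal{B}$ via the standard correspondence between the non-zero spectra of a product $\mathcal{B}\mathcal{V}$ and the reversed product $\mathcal{V}\mathcal{B}$.

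First I would treat (i) and (ii) simultaneously, since both follow from detailed balance. From (\ref{eq:P}),
\[
\pi_x \mathcal{P}_{xy} = \sum_{\alpha\ni x,y}\frac{w(\alpha)}{|\alpha|},
\]
which is manifestly symmetric under $x\leftrightarrow y$; hence $\pi_x\mathcal{P}_{xy} = \pi_y\mathcal{P}_{yx}$, i.e. detailed balance for $\mathcal{P}$ with respect to $\pi$. Summing this over $x$ and using that $\mathcal{P}$ is row-stochastic gives $\pi^T\mathcal{P} = \pi^T$. The identical computation for $\mathcal{Q}$ yields $\mu(\alpha)\mathcal{Q}_{\alpha\beta} = \sum_{x\in(\alpha\cap\beta)} w(\alpha)w(\beta)/\pi_x$, which is precisely (\ref{eq:loop_db}) and is symmetric in $\alpha\leftrightarrow\beta$, so $\mu$ is reversible and therefore stationary for $\mathcal{Q}$. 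The inner-product form of reversibility in (ii) is then just a restatement: $\langle u,\mathcal{P}v\rangle_\pi = u^TD_\pi\mathcal{P}v$ coincides with $\langle\mathcal{P}u,v\rangle_\pi = u^T\mathcal{P}^TD_\pi v$ for all $u,v$ exactly when $D_\pi\mathcal{P}$ is symmetric, which is the detailed balance just established; the case of $\mathcal{Q}$ and $D_\mu$ is identical.

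For (iii) and (iv) I would exploit the factorizations directly. Suppose $\mathcal{Q}v = \lambda v$ with $\lambda\neq 0$ and $v\neq 0$. Applying $\mathcal{B}$ gives $\mathcal{P}(\mathcal{B}v) = \mathcal{B}\mathcal{V}\mathcal{B}v = \mathcal{B}(\mathcal{Q}v) = \lambda(\mathcal{B}v)$; moreover $\mathcal{B}v\neq 0$, since $\mathcal{B}v = 0$ would force $\lambda v = \mathcal{V}\mathcal{B}v = 0$, contradicting $\lambda\neq 0$. Thus $\mathcal{B}$ sends a $\lambda$-eigenvector of $\mathcal{Q}$ to a $\lambda$-eigenvector of $\mathcal{P}$, and symmetrically $\mathcal{V}$ sends a $\lambda$-eigenvector of $\mathcal{P}$ to one of $\mathcal{Q}$. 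This is exactly the intertwining asserted in (iv) and, in particular, shows $\sigma(\mathcal{Q})\setminus\{0\} = \sigma(\mathcal{P})\setminus\{0\}$, which is (iii).

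To finish (iv) I would check the dimension equality by showing these maps are mutually inverse up to the scalar $\lambda$. On $\ker(\mathcal{Q}-\lambda 1)$ we have $\mathcal{V}\mathcal{B} = \mathcal{Q} = \lambda\,\mathrm{id}$, and on $\ker(\mathcal{P}-\lambda 1)$ we have $\mathcal{B}\mathcal{V} = \mathcal{P} = \lambda\,\mathrm{id}$; since $\lambda\neq 0$, the restricted maps $\mathcal{B}:\ker(\mathcal{Q}-\lambda 1)\to\ker(\mathcal{P}-\lambda 1)$ and $\mathcal{V}:\ker(\mathcal{P}-\lambda 1)\to\ker(\mathcal{Q}-\lambda 1)$ are bijections, with $\lambda^{-1}\mathcal{V}$ and $\lambda^{-1}\mathcal{B}$ as inverses, forcing $\dim\ker(\mathcal{P}-\lambda 1) = \dim\ker(\mathcal{Q}-\lambda 1)$. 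The one point requiring genuine care throughout is the restriction $\lambda\neq 0$: because $\mathcal{P}$ and $\mathcal{Q}$ act on spaces of different dimensions $|V|$ and $|\Gamma|$, their kernels generally differ, and the argument breaks precisely at $\lambda = 0$, where $\mathcal{B}v = 0$ becomes possible, so the exclusion of the zero eigenvalue is \emph{essential} rather than cosmetic. I expect this bookkeeping around the zero eigenvalue to be the only real subtlety; the remainder reduces to the routine correspondence between the non-zero spectra of $\mathcal{B}\mathcal{V}$ and $\mathcal{V}\mathcal{B}$.
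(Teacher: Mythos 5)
Your proof is correct and follows essentially the same route as the paper: detailed balance read off the explicit formulas \eqref{eq:P} and \eqref{eq:Q} gives (ii) and hence (i) via stochasticity, and the $\mathcal{B}$/$\mathcal{V}$ intertwining with mutually inverse (up to the factor $\lambda$) restricted maps gives (iv) and hence (iii). The only minor difference is your injectivity argument, $\mathcal{B}v=0 \Rightarrow \lambda v = \mathcal{V}\mathcal{B}v = 0$, which is slightly more elementary than the paper's appeal to orthogonality of eigenspaces, $\ker\mathcal{B}\subset\ker\mathcal{Q}\perp\ker(\mathcal{Q}-\lambda 1)$, since it does not use self-adjointness of $\mathcal{Q}$.
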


\begin{proof}
See Appendix.
\end{proof}

\begin{figure}[htp]
 \begin{subfigure}[b]{0.40\textwidth}
 \includegraphics[width=\textwidth]{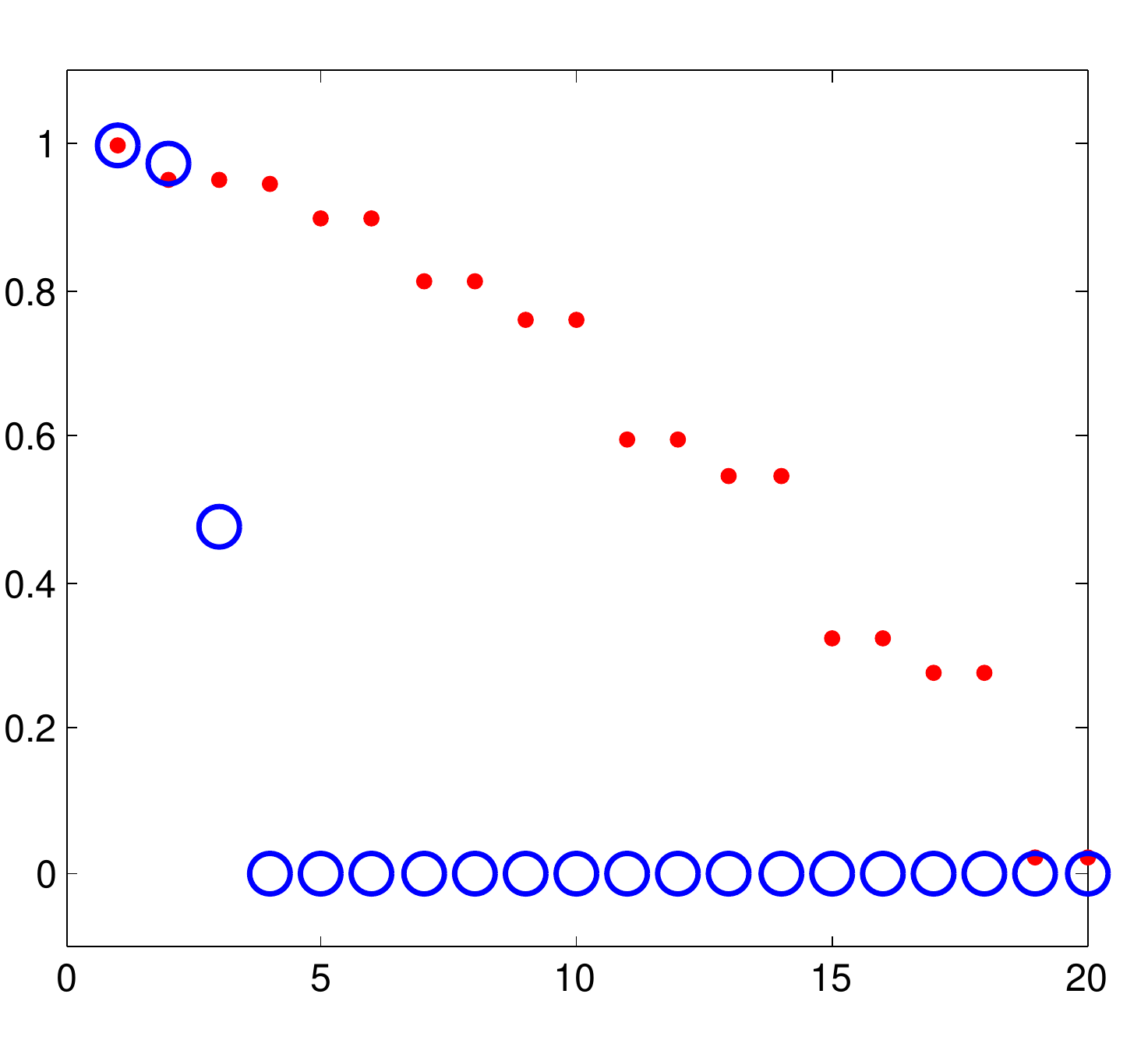}\caption{}\label{fig:BSpectrumReal}
 \end{subfigure}
 \begin{subfigure}[b]{0.40\textwidth}
 \includegraphics[width=\textwidth]{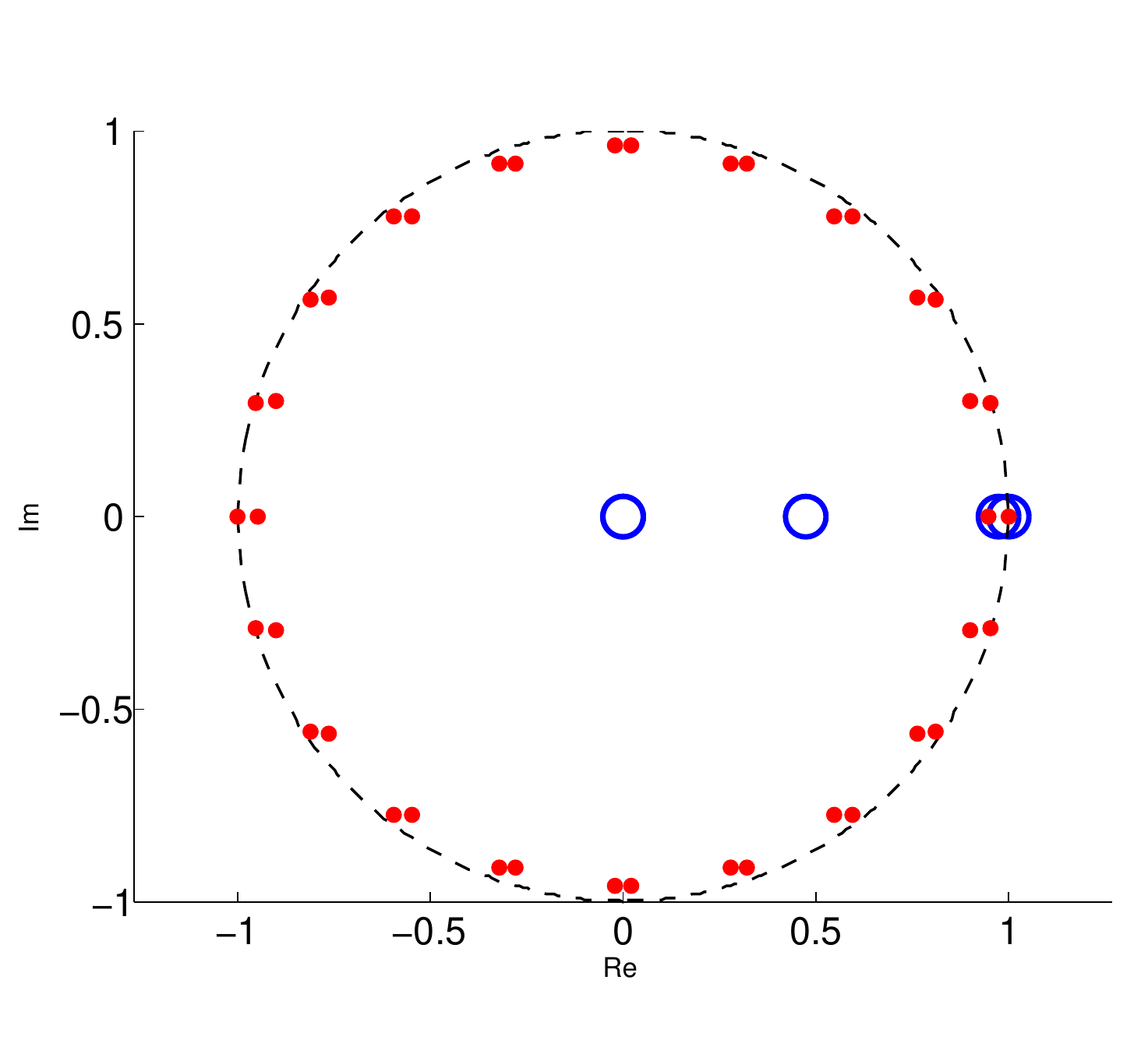}\caption{}
 \label{fig:BSpectrumAll}
 \end{subfigure}
   \caption{The eigenvalues of random walk processes on a barbell graph with $n = 40$ and $\epsilon=0.1$. Red: Eigenvalues of $P$. Blue: Eigenvalues of $\mathcal{P}$. \ref{fig:BSpectrumReal} shows the real part of the largest $20$ eigenvalues of $P$ and $\mathcal{P}$. \ref{fig:BSpectrumAll} shows the complete spectrum of $P$ and $\mathcal{P}$.}\label{fig:barbellSpectrum}
\end{figure}

Figure \ref{fig:barbellSpectrum} shows the spectrum of transition matrices $P$ and $\mathcal{P}$ for the barbell graph example from Figure \ref{fig:barbell} and $n = 40$,  $\epsilon=0.1$. The spectrum of the transition matrix $P$ of the original non-reversible random walk process consists of complex eigenvalues which all lie almost on the unit circle, i.e. have a module almost $1$. On the other hand, the spectrum of $\mathcal{P}$ (therefore also $\mathcal{Q}$) is real valued and offers a clear spectral gap after the second eigenvalue, indicating the existence of two network modules. The benefit of the new random walk process in this example is that it preserves the dominant real valued eigenvalues related to the existence of network modules and projects out the complex eigenvalues.

\subsection{Communication graph and cycle graph}\label{sec:cycledyn}
Time-reversibility of $(Z_n)_n$ and $(\zeta_n)_n$ directly implies that the weights $I_{xy} := \pi_x\mathcal{P}_{xy}$ and $W_{\alpha\beta} := \mu(\alpha)\mathcal{Q}_{\alpha\beta}$ are symmetric. Remarkably, both $W_{\alpha\beta}$ and $I_{xy}$ have an interpretation in terms of the original Markov process $(X_n)_n$. More precisely, the quantity
\begin{equation}
 W_{\alpha\beta} = \sum_{x\in (\alpha\cap\beta)} w(\alpha) b_\beta^{(x)},\quad \alpha,\beta\in\Gamma
 \label{eq:Wedge}
\end{equation}
is the total exchange of probability flux $F$ between the cycles $\alpha$ and $\beta$ per timestep. To see this, note that for $x\in \alpha$, $w(\alpha)$ is the fraction of probability current arriving at $x$ which is carried by $\alpha$, and $b_\beta^{(x)}$ is the probability to switch to $\beta$ at $x$, so that $w(\alpha)b_\beta^{(x)}$ is the exchange of probability flux between $\alpha$ and $\beta$ that happens at $x$. The quantity
\[
 I_{xy} = \sum_{\alpha\ni x,y} \frac{w(\alpha)}{|\alpha|}
\]
is a measure for intensity of communication between nodes $x,y$ under $(X_n)_n$ in terms of the cycles through which $x$ and $y$ are connected. Indeed, $I_{xy}$ is large, when $x$ and $y$ are connected by many cycles $\alpha$ which are important (i.e. $w(\alpha)$ large) and short (i.e. $|\alpha|$ small). This interpretation of $I_{xy}$ as a communication intensity measure will be made clear in the next Section \ref{sec:node_comm}.\\
\noindent
We will use these properties of $W_{\alpha\beta}$ and $I_{xy}$, to introduce the following undirected graphs:

\begin{definition}
 Given the directed graph $G = (V,E)$, we let $H_G(\Gamma, E_W)$ be the undirected graph with vertex set $\Gamma$ which has an edge with weight $W_{\alpha\beta}$ between $\alpha$ and $\beta$ if $W_{\alpha\beta} > 0$, and we call $H_G$ the \textbf{cycle graph} of $G$. We let $K_G(V,E_I)$ be the undirected graph with vertex set $V$ which has an edge with weight $I_{xy}$ between $x$ and $y$ if $I_{xy} > 0$, and we call $K_G$ the \textbf{communication graph} of $G$.
\end{definition}
Notice that $H_G$ is simply the network representation of the cycle-node-cycle process $(\zeta_n)_n$ and $K_G$ is the network representation of the node-cycle-node process $(Z_n)_n$. Thus, the standard random walk processes on $H_G$ and $K_G$ will have transition matrices $\mathcal{Q}$ and $\mathcal{P}$ respectively. Since important module information from $G$ is encoded in the structure of both $H_G$ and $K_G$, we can apply some of the existing module finding approaches in undirected graphs on $H_G$ and $K_G$ and project the result to $G$. In this paper we will present the algorithm for module identification which uses only the communication graph (see Section \ref{sec:Three}), because $H_G$ is usually much larger than $K_G$ for modular directed networks $G$.\\
\begin{figure}[htp]
\begin{center}
  \includegraphics[width=4in]{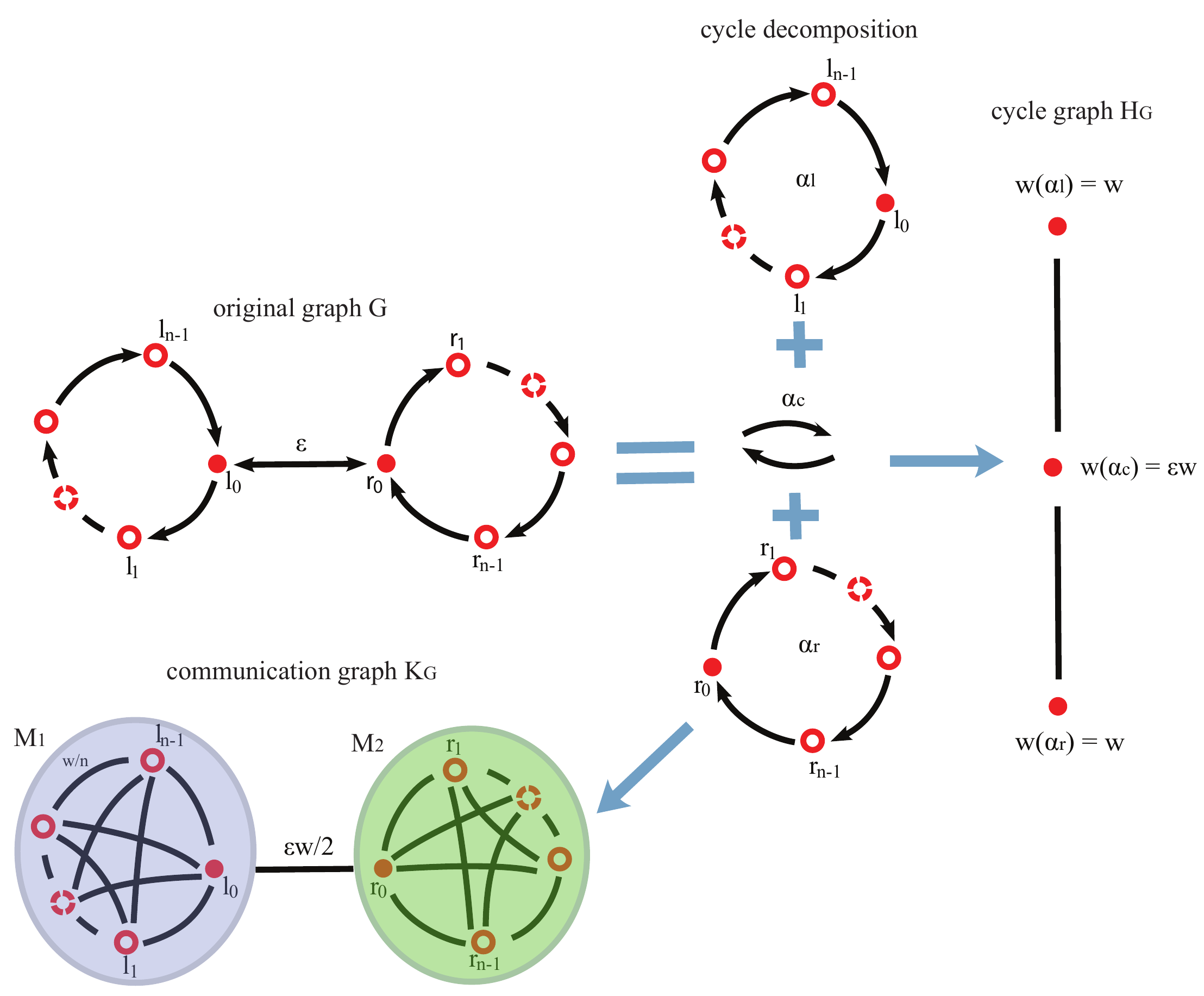}\\
  \caption{The barbell graph and corresponding cycle and communication graphs.}\label{fig:barbellAll}
  \end{center}
\end{figure}
In Figure \ref{fig:barbellAll}, the barbell graph is shown together with the corresponding cycle graph $H_G$ and communication graph $K_G$. Since the cycle decomposition of the barbell graph produces three cycles $\alpha_l$, $\alpha_c$ and $\alpha_r$, the cycle graph will have only three nodes. The edge weights (\ref{eq:Wedge}) of the cycle graph are $W_{\alpha_l\alpha_c} = \frac{w\epsilon}{1+\epsilon} = W_{\alpha_c\alpha_r}$. The communication graph of the barbell graph has the following edge weights
\begin{equation}
I_{xy}  = \begin{cases} w/n & x,y\in \alpha_l\\
                                                              w/n & x,y\in \alpha_r\\
                                                              \eps w/2 & x\neq y\in \alpha_c \\
                                                              \eps w/2 + w/n & x=y \in \alpha_c \\
                                                              0 & \mbox{otherwise}
                                                             \end{cases}
                                                             \label{eq:Ibarbell}
\end{equation}
As observed in Figure \ref{fig:barbellAll}, this results in cycles $\alpha_l$ and $\alpha_r$ being mapped into two complete subgraphs with equally weighted edges and consequently the appearance of two modules $M_1$ and $M_2$ in $K_G$.\\

\begin{remark} (Line-graphs)
Cycle graphs can be related to line-graphs (or edge-graphs), which are important objects in spectral graph theory \cite{LineGraphs03,Cvetkovic04}. If $G =(V,E)$ is an undirected graph with $|E|$ edges, its line-graph $L(G)$ has $|E|$ nodes which are edges of $G$ and nodes in $L(G)$ are adjacent if edges in $G$ are. So, a transformation of $G$ into $L(G)$ maps graph's edges into nodes and a transformation of $G$ into $H_G$ maps graph's cycles into nodes. 
Furthermore, line-graphs can be very efficiently used for fuzzy clustering of undirected networks \cite{Evans09} by employing some of the full partitioning approaches for finding modules in line-graphs and then projecting these modules back to the original graph. Similarly, one can use cycle graphs for fuzzy partitioning of directed networks.
\end{remark}

\begin{remark} (Entropy production)
The work of \cite{Timme12} introduces cycle graphs as a way to describe non-equilibrium Markov processes. Indeed, many non-equilibrium properties may very well be related to cycle decompositions of the type we discussed here. We give one example. The {\em entropy production rate} is believed to be a measure of the degree of irreversibility of a system. For Markov chains, it is given by
\[
e_p = \sum_{i,j}\frac{1}{2}\left(\pi_i p_{ij} - \pi_j p_{ji}\right)\log\left(\frac{\pi_i p_{ij}}{\pi_j p_{ji}}\right).
\]
It has been shown in \cite{Qian2004} that the entropy production rate can be expressed by means of the stochastic cycle decomposition (\ref{eq:WC}) as
\[
e_p = \sum_{\gamma\in\Gamma}(w(\gamma) - w(\gamma^-))\log\left(\frac{w(\gamma)}{w(\gamma^-)}\right).
\]
Here, $\gamma^-$ denotes the cycle $\gamma$ with reversed orientation. This shows that the entropy production rate is given in terms of the difference of the cycle weights for cycles $\gamma$ and their reversed counterparts $\gamma^-$.  More precise relations between cycle decomposition and entropy production will be the topic of our research in the future.
\end{remark}

\subsection{Information transport}\label{sec:node_comm}

In this section we show that the cycle-node-cycle process $(Z_n)_n$ in fact mimics a model for information transport in the network, which gives an interpretation to the quantity $I_{xy}$. Suppose that initially, $X_0 = x$. Let $\tilde Z_0 = x$ denote the initial location of one bit of information. We will now model how this bit is transported through the network by means of $(X_n)_n$:

\begin{enumerate}
 \item Let $\tau$ be the first time $(X_n)_n$ returns to $x$. Since we assumed $(X_n)_n$ to be irreducible, $\tau$ is a.s. finite and the path $p = (X_0,\ldots, X_\tau)$ is a loop, possibly with self-intersections. We can decompose $p$ into simple cycles, exactly one of which contains $x$. Let this cycle be $\gamma$.
 \item The bit of information is now distributed uniformly among all nodes on $\gamma$. To model this, we will pick the new location $\tilde Z_1 = y$ of the bit uniformly at random among the vertices of $\gamma$. We will then restart the process $(X_n)_n$ at $y$ and repeat.
\end{enumerate}

Now observe that the outcome of step 1 is the cycle $\gamma$ exactly iff $(X_n)_n$ \textbf{passes through} $\gamma$, as it was defined in Section \ref{sec:flow}. The probability of the outcome being $\gamma$ is therefore the probability of $(X_n)_n$ passing through $\gamma$ conditioned on starting at $x$, which is $\delta(x\in \gamma)w(\gamma)/\sum_{\gamma\supset x} w(\gamma) = \mathcal{B}_{x\gamma}$. On the other hand, if the outcome of (1) is $\gamma$, then the probability of $\tilde Y_1 = y$ in (2) is $\frac{1}{|\gamma|}\delta(y\in \gamma) = \mathcal{V}_{\gamma y}$. We thus have by Lemma \ref{lemma:PQprop}:
\[
 \pP(\tilde Z_1 = y| \tilde Z_0 = x) = \sum_\gamma \mathcal{B}_{x\gamma}\mathcal{V}_{\gamma y} = \mathcal{P}_{xy} = \pP(Z_1 = y| Z_0 = x).
\]
This implies that $(\tilde Z_n)_n = (Z_n)_n$, that is the node-cycle-node process describes the sequence of locations of the bit of information which is transported along the network by $(X_n)_n$. Then
\[
 I_{xy} = \pP(Z_n = x, Z_{n+1} = y) = \sum_{\alpha\ni x,y} \frac{w(\alpha)}{|\alpha|}
\]
is the probability that the bit is delivered from $x$ to $y$, and therefore measures how easy it is to share information between the nodes $x$ and $y$. From the point of view of the network $G$, $I_{xy}$ connects topological properties of $G$ (how many cycles connect $x$ and $y$, and how long are they) with dynamical properties of the process $(X_n)_n$ (which gives the importance weights $w(\alpha)$). As we established in Section \ref{sec:cycledyn}, $I_{xy}$ is symmetric since $(Z_n)_n$ is reversible. This allows us to apply any clustering method designed for undirected, weighted networks to find a partition of $G$ into modules based on node communication. The algorithmic aspect of this approach will be presented in the next section together with a view on its computational complexity.

\section{Algorithmic aspects}\label{sec:Three}

\subsection{Identification of modules}

Following the ideas developed in Section \ref{sec:Three}, we will say that a subset $M\subset V$ is a module of the directed graph $G=(V,E)$ iff it is a module in the usual sense of the undirected, weighted {\em communication graph} $K_G = (V,E_I)$ defined in Section \ref{sec:cycledyn}. This suggests the following procedure to identify modules in $G$:

\begin{enumerate}
 \item Given the graph $G=(V,E)$ with edge weights $K(x,y)$, construct the transition matrix $P$ according to (\ref{eq:RWmatrix}).
 \item Obtain the cycle decomposition $\{\Gamma,w\}$ of the RW given by $P$, where the weights $w(\gamma)$ satisfy (\ref{eq:WC}).
 \item Construct the undirected weighted graph $K_G$ and find modules in $K_G$.
\end{enumerate}
\noindent
To partition $K_G$ in step (3), we will use Markov State Modeling (MSM) clustering \cite{Djurdjevac2011,SarichNet11}, but any other method designed for clustering undirected, weighted graphs could be used instead \cite{Fortunato2010,Newman2003b}. MSM clustering returns $m$ module cores $M_1,\ldots, M_m\subset V$, which are disjoint and do not necessarily form a full partition
\[
\cup_{i=1}^m M_i\not= V\quad \Rightarrow\quad T = V\setminus \cup_{i=1}^m M_i\not= \emptyset.
\]
The number of modules $m$ can be inferred from the spectrum of $P$, see \cite{Djurdjevac2010} for more details. Furthermore, the MSM approach aims at finding \textbf{fuzzy network partitions} into modules, such that some nodes do not belong to only one of the modules but to several or to none at all. For nodes that do not deterministically belong to one of the modules, we will say that they belong to the \textbf{transition region} $T$. The next obvious question is how to characterize the affiliation of the nodes from the transition region to one of the modules. In the MSM approach, this affiliation is given in terms of the random walk process $(Z_n)_n$ on $K_G$ by the \textbf{committor functions}
\[
q_i(x) = \pP\left[\tau_x\left(\cup_{j=1\atop j\not=i}^m M_j\right)>\tau_x(M_i)\right],\qquad i = 1,\ldots,m, \, x\in V
\]
where $\tau_x(A)$ is the first hitting time of the set $A\subset V$ by the process $(Z_n)_n$ if started in  node $x$, $\tau_x(A)=\inf\{t\ge 0: Z_n\in A, \,Z_0=x\}$. Therefore, $q_i(x)$ gives us the probability that the random walk process if started in node $x$, will enter the module $M_i$ earlier than any other module. Committor functions can be computed very efficiently by solving sparse, symmetric and positive definite linear systems \cite{MeSchEve,Djurdjevac2010}
\begin{eqnarray}
(\mathrm{Id}-P)q_i(x) & = & 0,\quad \forall x\in T\nonumber\\
q_i(x) & = & 1,\quad \forall x\in M_i\label{committor}\\
q_i(x) & = & 0,\quad \forall x\in M_j, j\neq i.\nonumber
\end{eqnarray}
Furthermore, it is easy see that the committor functions $q_1,\ldots,q_m$ form a partition of unity
\[
\sum_{i=1}^m q_i(x)=1,\quad\forall x\in V,
\]
such that we can interpret $q_i(x)$ as the natural random walk based probability of affiliation of a node $x$ to a module $M_i$.

\subsection{Obtaining the cycle decomposition}

We now turn our attention to the problem of actually obtaining the cycle decomposition $\{\Gamma,w\}$ with weights $w(\gamma)$ satisfying \eqref{eq:WC}. This is clearly a hard problem: While even finding and enumerating all cycles in $G$ scales exponentially in the worst case, explicitly computing the weight $w(\gamma)$ for $\gamma = (x_1,\ldots x_l)$ according to (\ref{eq:weightsformula}), requires the computation of $l$ taboo Green functions of the form $(I - P_{\{x_1,\ldots,x_s\}})^{-1}$ for $1\leq s\leq l$
where $P_{\{x_1,\ldots,x_s\}}$ is $P$ with rows and columns indexed by $\{x_1,\ldots,x_s\}$ deleted \cite{Qian2004}. This will not be feasible for large networks. We therefore resort to a sampling algorithm which computes the collection $\Gamma$ of cycles and counts $N_T^\gamma$ from a realization $(X_n)_{1\leq n\leq T}$ of the Markov chain $(X_n)_n$, see \cite{Qian2004}. The algorithm uses an auxiliary chain $\eta$ which is initialized at $\eta = X_1$, then the states $X_i$ are added sequentially to $\eta$ until $\eta$ contains a cycle, which is then removed from $\eta$:

\begin{center}
\colorbox{boxback}{
  \begin{minipage}[t]{12cm}
\begin{center}
  \begin{minipage}[t]{11cm}
\vspace{0.5cm}
\textbf{Algorithm summary of the stochastic cycle decomposition:}\\
\textbf{(i)} Initialization: Set $\Gamma = \emptyset$ and $\eta = X_1$.\\
\textbf{(ii)} for $i=2$ to $T$: If $X_i \notin \eta$, set $\eta = [\eta, X_i]$. If $\eta = [\eta_1, \ldots, \eta_{p-1},X_i,\eta_{p+1},\ldots,\eta_l]$ then
\begin{enumerate}
\item Set $\gamma = [X_i,\eta_{p+1},\ldots, \eta_{l}]$ and $\eta = [\eta_1,\ldots,\eta_{p-1}]$.
\item If $\gamma\in \Gamma$, then $N_T^\gamma = N_T^\gamma + 1$. Otherwise add $\gamma$ to $\Gamma$ and set $N_T^\gamma=1$.
\end{enumerate}
  \end{minipage}
\vspace{0.5cm}
\end{center}
 \end{minipage}
}
\end{center}

We then set $w_T(\gamma) := \frac{1}{T}N^\gamma_T$. In \cite{Qian2004} it has been shown that $w_{T}(\gamma) \rightarrow w(\gamma)$ a.s. as $T\rightarrow \infty$, therefore the finite-time weights $w_{T}$ returned by the sampling algorithm are an approximation of the true weights $w$. The computational complexity of this algorithm is at most quadratic in $T$. However, the number of cycles with significant weights $w(\gamma)$ which we have to sample accurately might scale unfavorably with the size of the network. In practise, one has to choose $T$ such that convergence is observed. If the network is very large and very modular such that $(X_n)_n$ is very metastable, this might be prohibitively expensive.

\begin{remark}[Timeseries perspective]
In many applications, the network $G$ will be a representation of a finite timeseries of data $(X_n)_{1\leq n\leq T}$. In such situations it is natural to use the sampling algorithm directly on the data, which fixes $T$ and removes any convergence issues. Our method then turns into a form of timeseries analysis via recurrence networks. We will address this point of view in more detail in further publications.
\end{remark}

\begin{remark}[Alternative Algorithms]
Various alternative algorithms exist to obtain a cycle decomposition $\{\tilde\Gamma, \tilde w\}$ which satisfies only (\ref{eq:flowdecomp}), but not (\ref{eq:WC}). We quickly review two examples:

\begin{enumerate}
\item[(a)] \textbf{Deterministic iterative algorithm:}\cite{Kalpazidou2006,Qian2004} This algorithm iterates the following steps: \\
(i) find a cycle $\gamma$ in $G$, \\
(ii) set $w(\gamma) = \min_{(xy)\subset \gamma} F_{xy}$,\\
(iii) update $F\rightarrow F - w(\gamma)\chi_\gamma$ where $\chi_\gamma$ is the unit flow along $\gamma$. \\
The iteration stops when $F<TOL$ is reached for a prescribed small $TOL>0$. This algorithm has polynomial complexity in the number of vertices of $G$.
\item[(b)] \textbf{Cycle-basis construction:} \cite{Kalpazidou2006} The space $\mathcal{C}$ of cycle flows is a vector space of dimension\footnote{The upper bound on $d$ is the Betti number of $G$.} $d \leq |E| - |V| + 1$ and can be equipped with a scalar product. Then we may take $\tilde\Gamma$ to be a basis of $\mathcal{C}$, which can be found e.g. via a minimal spanning tree \cite{Kalpazidou2006}. The weights $\tilde w$ can be computed by solving a linear system involving the $d\times d$ matrix $C_{\gamma\gamma'} = \langle \chi_\gamma, \chi_{\gamma'}\rangle$.
\end{enumerate}
If we use (a) or (b) instead of the sampling algorithm in step (2), then the resulting weights $\tilde I_{xy}$ loose their interpretation in terms of information transport measure between $x$ and $y$. Still, in many cases $\tilde I_{xy}$ might be a good approximation to $I_{xy}$, and (a) and (b) are fast even for large networks. The approximation quality of $\tilde I_{xy}$ obtained by (a) and (b) will be addressed in our future work.
\end{remark}

\subsection{Verification of modularity}

We can use the well-known modularity function \cite{Newman2006a} to evaluate the resulting partition $\{\chi_m\}_{m=1}^M$ of $K_G$. We obtain that
\begin{eqnarray}
 \bar Q & = & \sum_m\sum_{x,y} \chi_m(x)\left[ \pP(Z_n = x, Z_{n+1}=y) - \pP(Z_n=x)\pP(Z_n=y)\right]\chi_m(y)\notag\\
        & = & \sum_m\sum_{x,y} \chi_m(x)\left[I_{xy} - \pi_x\pi_y\right]\chi_m(y).\label{eq:Qbar}
\end{eqnarray}
Notice that $\bar Q$ is the modularity of the partitioning evaluated on the {\em undirected} graph $K_G$. Recently the extensions of modularity to directed networks have been discussed in literature \cite{Newman2008,Kim2010}. These approaches lead to the following modularity function $Q$ of partitioning on $G$
\begin{equation}
 Q = \sum_m\sum_{x,y} \chi_m(x)\left[\pi_x p_{xy} - \pi_x\pi_y\right]\chi_m(y).\label{eq:Qmod}
\end{equation}
However, approaches based on such $Q$ are shown to often neglect important directional information \cite{Fortunato2009,Schaub2012} and have the same outcome as when symmetrizing $P$. More precisely, $Q$ is unchanged if $\pi_xp_{xy}$ is replaced by the symmetrized version $\pi_xp^s_{xy} = \frac{1}{2}(\pi_xp_{xy} + \pi_yp_{yx})$, which ignores directions. In contrast, using $\bar Q$ to evaluate the quality of a partitioning of $K_G$ is unambiguous, and the directional information of $G$ is directly built into $I_{xy}$. We will demonstrate on some examples in Section \ref{sec:Four} that $Q$ and $\bar Q$ may differ significantly. This difference comes mainly from the fact that $Q$ takes into account only one-step transitions $p_{xy}$, unlike $\bar{Q}$ which considers multi-step transitions encoded in $I_{xy}$ via cycle structures of $G$.

\section{Numerical experiments}\label{sec:Four}

In this section, we will compare the results of three different clustering algorithms on a few example networks. The algorithms considered are:

\begin{enumerate}
 \item The algorithm proposed in Section \ref{sec:Three}, i.e. construction of $K_G$ via a realization $(X_n)_{1\leq n\leq T}$ and then MSM clustering of $K_G$, which we will refer to as {\bf Cycle MSM (CMSM) clustering}. This algorithm returns a fuzzy partition $\{\chi_m\}_{m=1}^M$.
 \item Construct $K_G$ via a realization $(X_n)_{1\leq n\leq T}$ and then optimize the $K_G$-based modularity function $\bar Q$ \eqref{eq:Qbar} over all full partitions $\{\chi_m\}_{m=1}^M$, $\chi_m(x)\in \{0,1\}$. We will refer to this as {\bf $\bar Q$-maximization}.
 \item Maximize the $G$-based modularity function $Q$ \eqref{eq:Qmod} over all full partitions $\{\chi_m\}_{m=1}^M$, $\chi_m(x)\in \{0,1\}$. This direct approach doesn't need the construction of $K_G$. We will refer to it as {\bf $Q$-maximization}.
\end{enumerate}

\subsection{Barbell graph}

As discussed earlier, the communication graph $K_G$ consists of two complete graphs with $n$ nodes each which are joined by the edge $(l_0r_0)$. CMSM clustering produces a full partition $\{\chi_1,\chi_2\}$ where $\chi_1$ is one on $C_1 = \alpha_l$ and zero on $C_2 = \alpha_r$, see Figure \ref{fig:barbellDecomposition}. The scores assigned by $Q$ and $\bar Q$ to $\{\chi_1,\chi_2\}$ are almost identical:
\[
Q(C_1,C_2) = \frac{1}{2} - \frac{\eps}{n+\eps}, \quad \bar Q(C_1,C_2) = \frac{1}{2} - \frac{1}{2}\frac{\eps}{n+\eps}.
\]

Now we address the question whether $\{\chi_1,\chi_2\}$ optimizes $Q$ resp. $\bar Q$. To do so, we compare $\{\chi_1,\chi_2\}$ for even $n$ with another partition into 3 sets $C_{1,a}, C_{1,b}$ and $C_2$ where $|C_{1,a}| = |C_{1,b}|$, see Figure \ref{fig:barbellDecomposition}. Let $\Delta Q = Q(C_1,C_2) - Q(C_{1,a},C_{1,b},C_2)$ and $\Delta \bar Q$ similarly. One finds
$$\Delta Q = \frac{1}{8} - \frac{1}{n+\eps}, \quad \Delta \bar Q = \frac{1}{8} - \frac{1}{4}\frac{n}{n+\eps}.$$

Then, $\Delta Q >0$ for $n\geq 8$ and $\Delta \bar Q < 0 $ as long as $n > \eps$. That is, as $n$ grows $Q$-maximization produces a partition into more and more subchains with less then 8 nodes while $\bar Q$-maximization always produces the partition $\{\chi_1,\chi_2\}$. Notice that the block structure (\ref{eq:Ibarbell}) indicates that it is equally easy to transmit information between all nodes in $C_1$ while it is hard to transmit information between $C_1$ and $C_2$. From the point of view of the RW process this is natural: The average time to go from $C_1$ to $C_2$ is $(n+\frac{1}{2})\frac{\eps}{1+\eps}$, while mixing within $C_1$ and in particular transitions between $C_{1,a}$ and $C_{1,b}$ happen quickly.

To summarize, the barbell graph has a modular structure which is not based on link density and therefore cannot be detected by density based methods such as $Q$-maximization. Rather, this modular structure is information based and revealed by $K_G$. In the particular case here, using CMSM clustering or $\bar Q$-optimization to partition $K_G$ yields the same result.

\begin{figure}[!ht]
    \centering
    \includegraphics[width=0.5\textwidth]{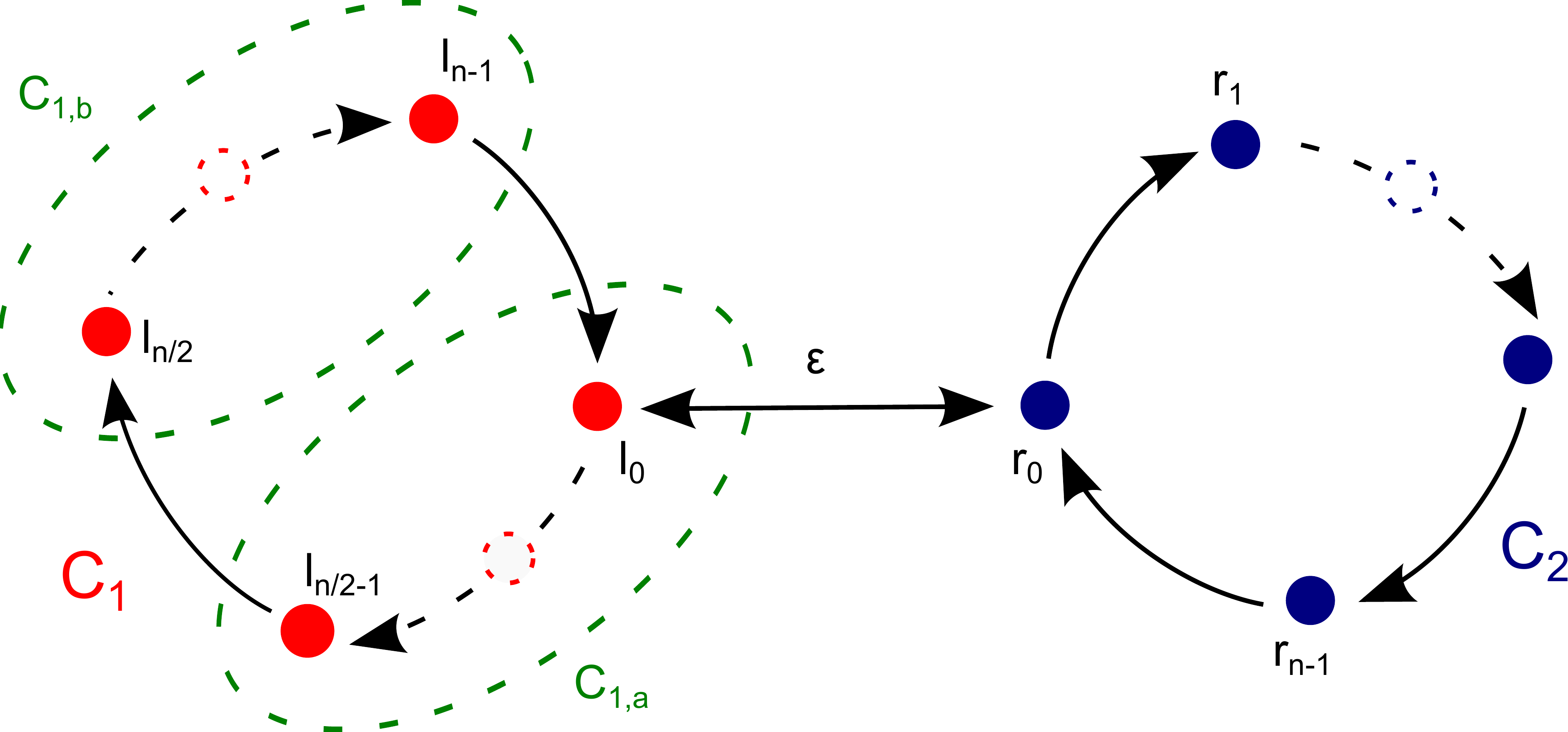}
    \caption{The two modules $C_1$ and $C_2$ produced by CMSM clustering. $Q$-maximization gives a partitioning into subchains of the type $C_{1,a}, C_{1,b}$.}
    \label{fig:barbellDecomposition}
\end{figure}

\subsection{Wheel switch}

Our next example is the 'wheel switch' graph which is shown in Figure \ref{fig:wheel_a}. It consists of a clockwise outer loop $\alpha_o$ and an anticlockwise inner loop $\alpha_i$ with $n$ nodes each and some connections between the two loops. At each connection, the probability to switch between $\alpha_o$ and $\alpha_i$ is $p$ and the connections are such that two smaller loops $\beta_1$ and $\beta_2$ of length 4 are formed (coloured in Figure \ref{fig:wheel_a}).

The community structure of this graph depends on $p$. If $p>1/2$, completing the small loops $\beta_1$ and $\beta_2$ is more likely then completing $\alpha_i$ and $\alpha_o$, so we expect a community structure where $\beta_1$ and $\beta_2$ dominate. Indeed, CMSM clustering produces the partition shown in Figure \ref{fig:wheel_a} which identifies $\beta_1$ and $\beta_2$ as modules and gives all other nodes an affiliation of $0.5$ to both modules (indicated by the white colour). The modularity values achieved by this partitioning depend on $p$ and $n$, but typical values are $\bar Q \approx 0.2$.

If $p\leq 1/2$, transitions between $\alpha_i$ and $\alpha_o$ are unlikely and we expect to find the community structure shown in Figure \ref{fig:wheel_b}. Indeed CMSM clustering produces exactly this partition. As in the previous example, for $p\leq 1/2$ $\bar Q$-maximization gives the same result. On the other hand and again as in the previous example, $Q$-maximization divides the red and green modules in Figure \ref{fig:wheel_b} further into many small chains of length less then $8$.

Also for $1/2 < p\leq 0.63$, $\bar Q$-maximization produces the partition shown in Figure \ref{fig:wheel_b}, while for $p>0.63$ it gives the one shown in Figure \ref{fig:wheel_c}. Note that the difference between Figure \ref{fig:wheel_a} and Figure \ref{fig:wheel_c} is only that the white nodes are now fully assigned to one of the two modules in such a way that as much of $\alpha_i$ and $\alpha_o$ is intact as possible. $Q$-maximization again produces many small chains of length less then 8 as modules, how many depends on $n$. Even if we restrict $Q$-maximization to partitions with only two modules, the optimum is degenerate: Any cut $S$ that produces modules of equal size and leaves $\beta_1$ and $\beta_2$ intact, as shown in Figure \ref{fig:wheel_d}, will optimize $Q$.

In summary, CMSM clustering exactly reproduces the expected community structure for all values of $p$ and $n$. $Q$-maximization destroys the community structure for large $n$ and produces many small chains instead. $\bar Q$-maximization cannot produce the fuzzy partition of Figure \ref{fig:wheel_a}, but instead produces the best possible full partition, independent of $n$. However, the $p$-threshold between outcomes \ref{fig:wheel_a} and \ref{fig:wheel_b} is $p=0.5$ for CMSM clustering, while it is $p=0.63$ between outcomes \ref{fig:wheel_b} and \ref{fig:wheel_c} for $\bar Q$-maximization.

\begin{figure}[h!]
 \begin{subfigure}[b]{0.45\textwidth}
 \includegraphics[width=0.95\textwidth]{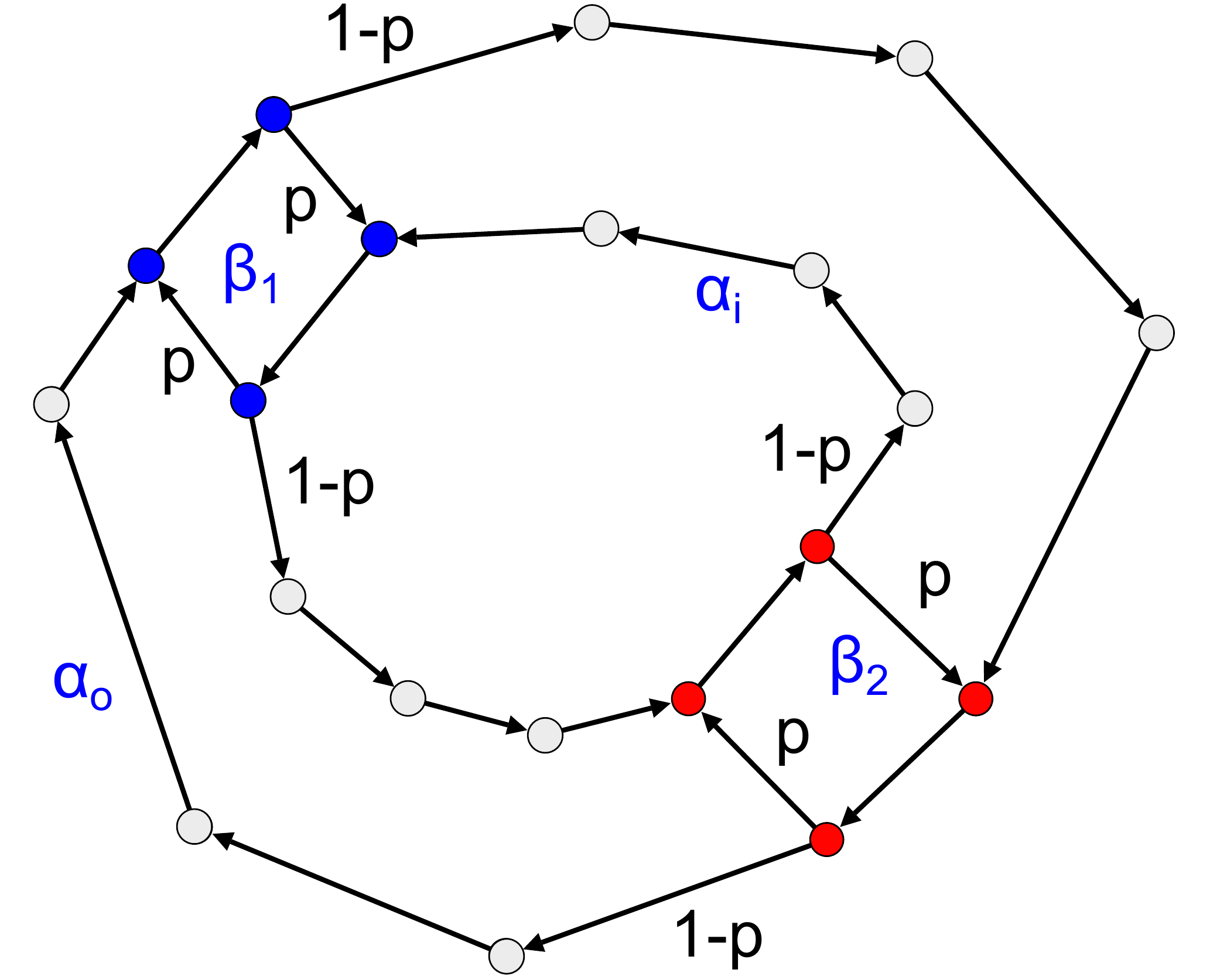}
 \caption{}
 \label{fig:wheel_a}
 \end{subfigure}
 \begin{subfigure}[b]{0.45\textwidth}
 \includegraphics[width=0.95\textwidth]{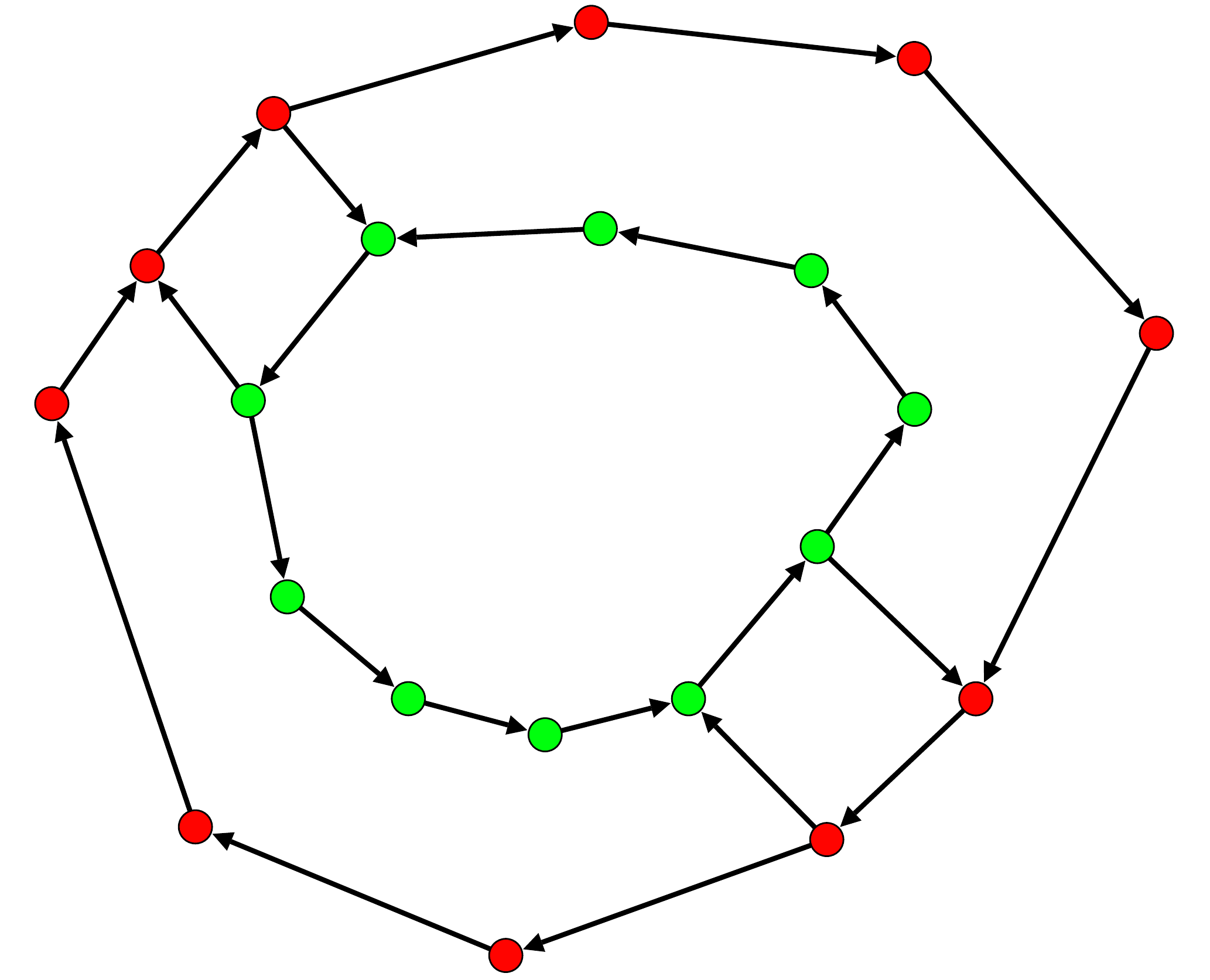}
 \caption{}
 \label{fig:wheel_b}
 \end{subfigure}
 \begin{subfigure}[b]{0.45\textwidth}
   \includegraphics[width=0.95\textwidth]{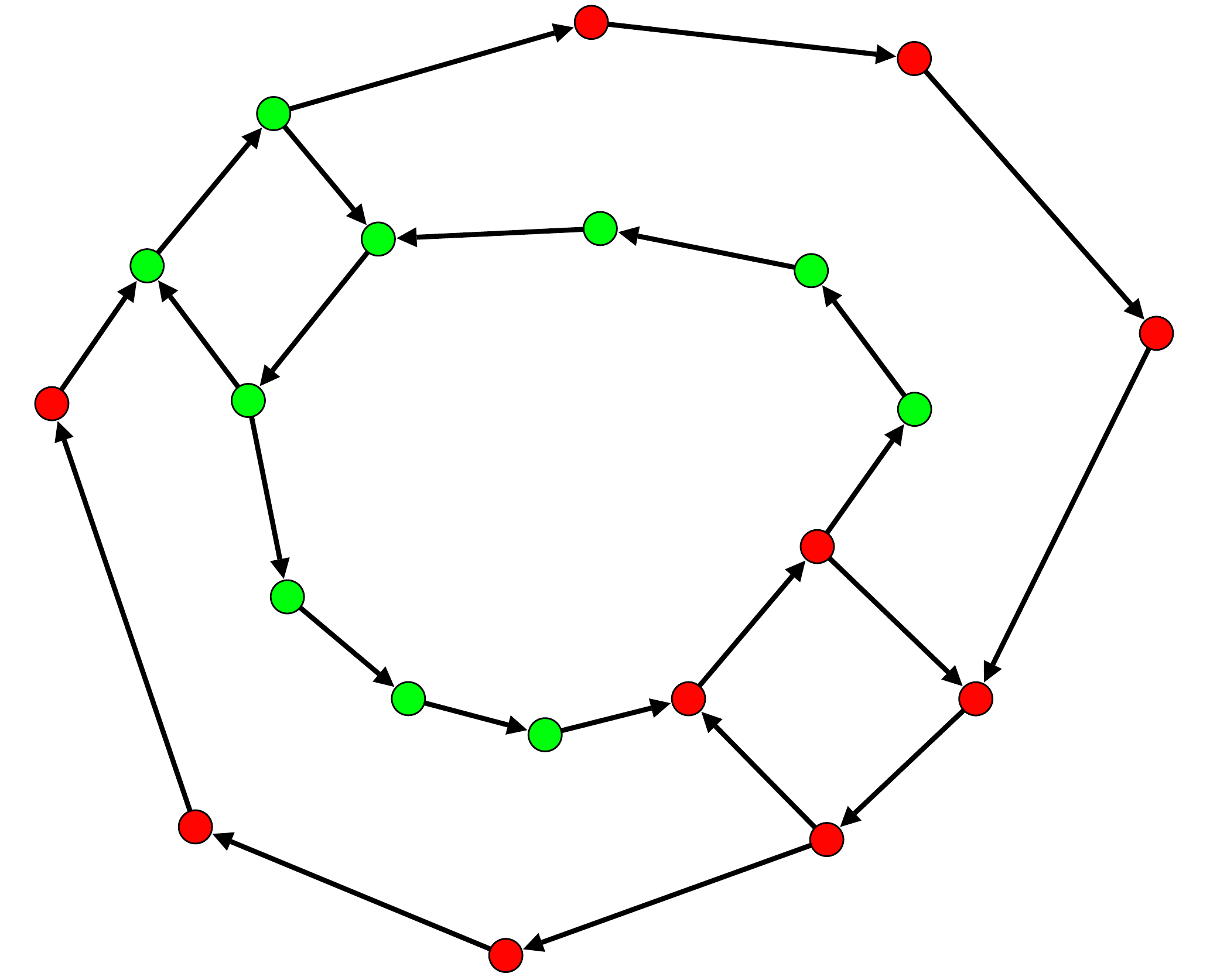}
   \caption{}
   \label{fig:wheel_c}
   \end{subfigure}
 \begin{subfigure}[b]{0.45\textwidth}
  \includegraphics[width=0.95\textwidth]{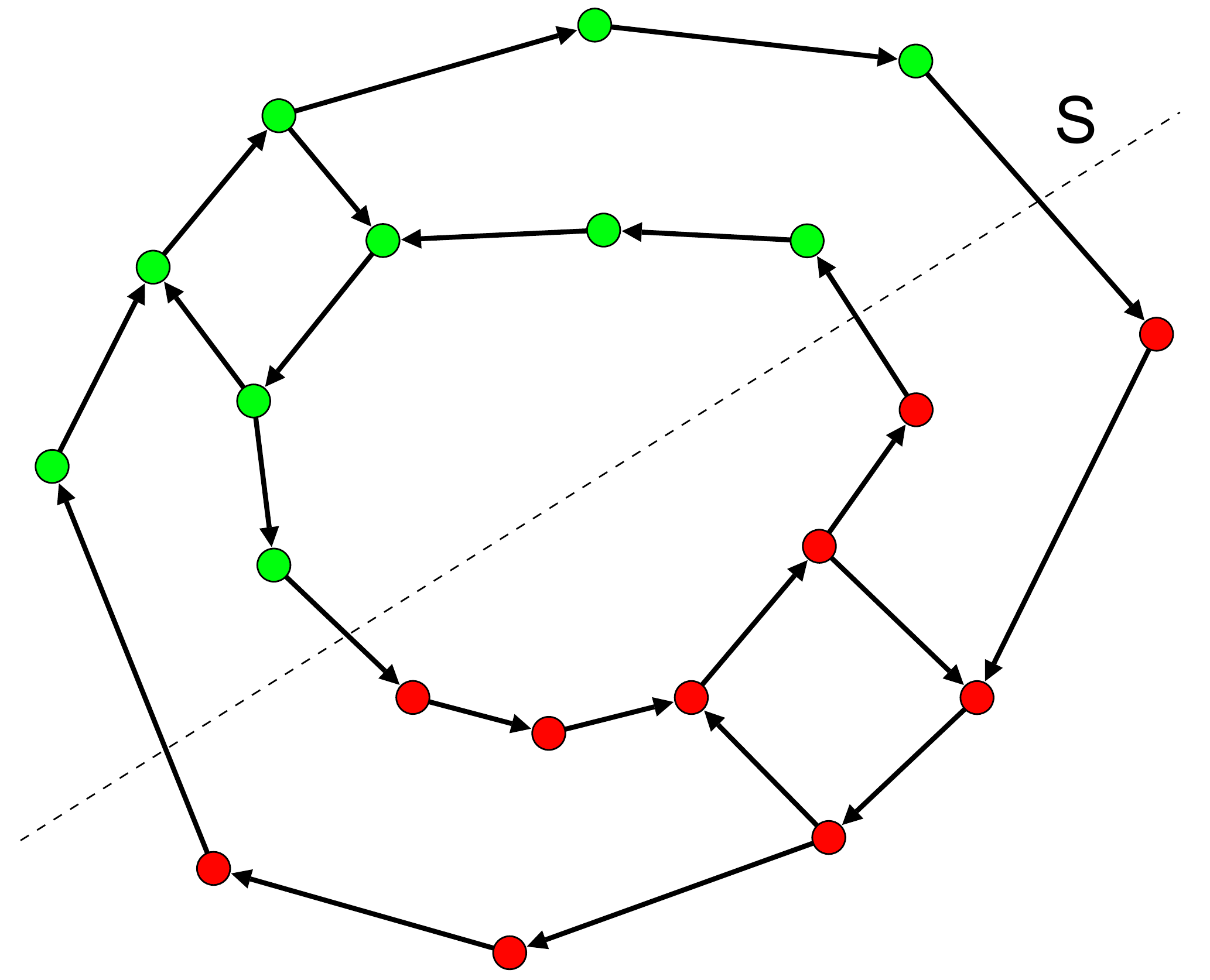}
  \caption{}
  \label{fig:wheel_d}
  \end{subfigure}
  \caption{Graph of the wheel switch with $n=10$, and $p$ being the probability to switch between the outer and inner loop. The coloring: in (\subref{fig:wheel_a}) shows the modules found by CMSM clustering for $p>1/2$; in (\subref{fig:wheel_b}) is shown clustering found by CMSM for $p\leq 1/2$; (\subref{fig:wheel_c}) shows resulting clustering of $\bar Q$-maximization for $p\geq 0.63$; and (\subref{fig:wheel_d}) one example of clustering obtained by $Q$-maximization for $p\geq 1/2$.}
  \label{fig:wheel}
\end{figure}

\section{Conclusion}
In this paper we discussed the problem of finding fuzzy partitions of weighted, directed networks. The main difficulty compared to the well studied case of analyzing undirected networks is that considering asymmetric edge directions often leads to more complicated network substructures. For this reason most of the clustering approaches for undirected networks cannot be generalized to directed networks in a straightforward way. Although different new methods have been proposed, most of them fail to capture all directional information important for module finding, as they consider only one-step, one-directional transitions, which can lead to detecting false modules (where nodes are closely connected only in one direction) and over-partitioning of the network. To this end, we introduced herein a novel measure of communication between nodes which is based on using multi-step, bidirectional transitions encoded by a cycle decomposition of the probability flow. We have shown that this measure is symmetric and that
it
reflects the information transport in the network. Furthermore, this enabled us to construct an undirected communication graph that captures information flow of the original graph and apply clustering methods designed for undirected graphs. We applied our new method to several examples, showing how it overcomes the essential limitation of common methods. This indicates that our method can be used to circumvent the problems of one-step methods, like modularity based methods. Further generalizations of modularity function and consideration of different null models will be the topic of our future research. An important aspect of our method is its role in describing non-equilibrium Markov processes and in particular the connection with entropy production rate. Our future research will be oriented towards these problems.

\section*{Acknowledgments} The authors would like to thank Peter Koltai and Marco Sarich for insightful discussions. RB thanks the Dahlem Research School (DRS) and the Berlin Mathematical School (BMS).

\section*{Appendix}

\subsection*{Proof of equation \ref{eq:Ptilde}:}
We want to find the transition matrix for $(\xi_n)_n$:
\begin{eqnarray*}
 \pP(\xi_{n} = \beta| \xi_{n-1} = \alpha) & = & \sum_x \pP(\xi_n = \beta|\xi_{n-1} = \alpha,  X_n = x)\pP( X_n = x|\xi_{n-1} = \alpha)\\
 & = & \sum_x \pP(\xi_n = \beta| X_n = x)\pP(X_n = x|\xi_{n-1} = \alpha) \\
 & = & \sum_{x\in(\alpha \cap \beta)} b_\beta^{(x)}\pP(X_n = x|\xi_{n-1} = \alpha)
\end{eqnarray*}

by properties of $(X_n)_n, (\xi_n)_n$ and the definition of $b_\beta^{(x)}$. Now

\begin{eqnarray*}
 \pP(X_n=x|\xi_{n-1} = \alpha) & = & \sum_y \pP(X_n = x|\xi_{n-1} = \alpha,X_{n-1} = y)\pP(X_{n-1} = y|\xi_{n-1} = \alpha) \\
 & = & \sum_y \pP(X_n = x|\xi_{n-1} = \alpha,X_{n-1} = y)\\
 & & \times\: \pP(\xi_{n-1} = \alpha|X_{n-1}= y)\frac{\pP(X_{n-1} = y)}{\pP(\xi_{n-1} = \alpha)}\\
 & = & \sum_{y\subset\alpha} \pP(X_n = x|\xi_{n-1} = \alpha,X_{n-1} = y)b_\alpha^{(y)}\frac{\pi_y}{|\alpha|F(\alpha)} \\
 & = & \sum_{y\subset\alpha} \pP(X_n = x|\xi_{n-1} = \alpha,X_{n-1} = y)\frac{1}{|\alpha|}
\end{eqnarray*}

Since $(X_n)_n$ is actually determined by the pair $(\xi_{n-1},X_{n-1})$, we have

$$\pP(X_n = x|\xi_{n-1} = \alpha,X_{n-1} = y) = \begin{cases} 1 & y=n^-_\alpha(x) \\ 0 & \mbox{otherwise}\end{cases}$$

where $n^-_\alpha(x)$ is the vertex we reach next if we follow the cycle $\alpha$ in reverse orientation starting at $x$. Putting it all together, we have

$$\mathcal{Q}_{\alpha\beta} := \pP(\xi_{n} = \beta| \xi_{n-1} = \alpha) = \sum_{x\in (\alpha\cap\beta)} \frac{1}{|\alpha|} b_\beta^{(x)}.$$

\subsection*{Proof of Lemma \ref{lemma:PQprop}:}
(i) follows from (ii) by the fact that $\mathcal{P}$ and $\mathcal{Q}$ are stochastic. To show (ii), $\pi_x \mathcal{P}_{xy} = \pi_y\mathcal{P}_{yx}$ and $\mu(\alpha)\mathcal{Q}_{\alpha\beta} = \mu(\beta)\mathcal{Q}_{\beta\alpha}$ can be directly inferred from (\ref{eq:P}) and (\ref{eq:Q}). (iii) follows from (iv). To show (iv), let $\lambda \neq 0$ and $v\in \ker(\mathcal{Q}-\lambda 1)$. This implies $\mathcal{V}\mathcal{B}v = \lambda v$, multiplication from the left with $\mathcal{B}$ yields $\mathcal{P}\mathcal{B}v = \lambda \mathcal{B}v$, hence $\mathcal{B}v\in \ker(\mathcal{P}-\lambda 1)$. Now since $\mathcal{Q} = \mathcal{V}\mathcal{B}$ we have $\ker \mathcal{B} \subset \ker\mathcal{Q} \perp \ker(\mathcal{Q}-\lambda 1)$. Therefore $\mathcal{B}$ is injective as a map from $\ker(\mathcal{Q} - \lambda 1)$ to $\ker (\mathcal{P} - \lambda 1)$. To show that $\mathcal{B}$ is also surjective, take $w\in \ker(\mathcal{P}-\lambda 1)$. Define $v:=\frac{1}{\lambda}\mathcal{V}w$. Then $v\in \ker(\mathcal{Q}-\lambda
1)$ since $\mathcal{Q}v = \frac{1}{\lambda}\mathcal{Q}\mathcal{V}w = \frac{1}{\lambda}\mathcal{V}\mathcal{P}w = \frac{1}{\lambda}\mathcal{V}\lambda w = \lambda v$, and $\mathcal{B}v = \frac{1}{\lambda}\mathcal{B}\mathcal{V}w = w$.

\bibliographystyle{plain}
\bibliography{jcdpaper}

\begin{thebibliography}{10}

\bibitem{Arenas2007}
A~Fern\'{a}ndez A~Arenas1, J~Duch and S~G\'{o}mez.
\newblock Size reduction of complex networks preserving modularity.
\newblock {\em New J. Phys.}, 9:176, 2007.

\bibitem{Timme12}
B.~Altaner, S.~Grosskinsky, S.~Herminghaus, L.~Katth\"an, M.~Timme, and
  J.~Vollmer.
\newblock Network representations of nonequilibrium steady states: Cycle
  decompositions, symmetries, and dominant paths.
\newblock {\em Phys. Rev. E}, 85:041133, 2012.

\bibitem{Barrat04}
A.~Barrat, M.~Barthelemy, R.~Pastor-Satorras, and A.~Vespignani.
\newblock The architecture of complex weighted networks.
\newblock {\em Proceedings of the National Academy of Sciences of the United
  States of America}, 101(11):3747--3752, 2004.

\bibitem{Chen2006}
Jingchun Chen and Bo~Yuan.
\newblock Detecting functional modules in the yeast protein--protein
  interaction network.
\newblock {\em Bioinformatics}, 22(18):2283--2290, September 2006.

\bibitem{Djurdjevac2011}
N.~Djurdjevac, S.~Bruckner, T.~O.~F. Conrad, and Ch. Sch\"utte.
\newblock Random walks on complex modular networks.
\newblock {\em Journal of Numerical Analysis, Industrial and Applied
  Mathematics}, 6:29--50, 2011.

\bibitem{Djurdjevac2010}
N.~Djurdjevac, M.~Sarich, and Ch~Sch\"utte.
\newblock Estimating the eigenvalue error of markov state models.
\newblock {\em Multiscale Modeling \& Simulation}, 10:61--81, 2012.

\bibitem{Evans09}
T.~S. Evans and R.~Lambiotte.
\newblock Line graphs, link partitions, and overlapping communities.
\newblock {\em Phys. Rev. E}, 80:016105, 2009.

\bibitem{Fortunato2010}
Santo Fortunato.
\newblock Community detection in graphs.
\newblock {\em Physics Reports}, 486(3–5):75 -- 174, 2010.

\bibitem{Girvan02}
M.~Girvan and M.E.J. Newman.
\newblock Community structure in social and biological networks.
\newblock {\em Proceedings of the National Academy of Sciences of the United
  States of America}, 99(12):7821--7826, 2002.
\newblock cited By (since 1996)3239.

\bibitem{Qian2004}
D.~Jiang, M.~Qian, and M.-P. Quian.
\newblock {\em Mathematical theory of nonequilibrium steady states: on the
  frontier of probability and dynamical systems.}
\newblock Springer, 2004.

\bibitem{Kalpazidou2006}
Sophia~L. Kalpazidou.
\newblock {\em Cycle Representations of Markov Processes}.
\newblock Springer, 2006.

\bibitem{Kim2010}
Youngdo Kim, Seung-Woo Son, and Hawoong Jeong.
\newblock Finding communities in directed networks.
\newblock {\em Phys. Rev. E}, 81:016103, Jan 2010.

\bibitem{Fortunato2009}
Andrea Lancichinetti and Santo Fortunato.
\newblock Benchmarks for testing community detection algorithms on directed and
  weighted graphs with overlapping communities.
\newblock {\em Phys. Rev. E}, 80:016118, Jul 2009.

\bibitem{Lancichinetti11}
Andrea Lancichinetti, Filippo Radicchi, José~J. Ramasco, and Santo Fortunato.
\newblock Finding statistically significant communities in networks.
\newblock {\em PLoS ONE}, 6(4):e18961, 04 2011.

\bibitem{Newman2008}
E.~A. Leicht and M.~E.~J. Newman.
\newblock Community structure in directed networks.
\newblock {\em Phys. Rev. Lett.}, 100:118703, Mar 2008.

\bibitem{Malliaros2013}
Fragkiskos~D. Malliaros and Michalis Vazirgiannis.
\newblock Clustering and community detection in directed networks: A survey.
\newblock {\em Physics Reports}, 533(4):95 -- 142, 2013.
\newblock Clustering and Community Detection in Directed Networks: A Survey.

\bibitem{MeSchEve}
P.~Metzner, Ch. Sch\"utte, and E.~Vanden-Eijnden.
\newblock Transition path theory for markov jump processes.
\newblock {\em Multiscale Modeling \& Simulation}, 7(3):1192--1219, 2009.

\bibitem{Newman2003b}
M.~E.~J. Newman.
\newblock The structure and function of complex networks.
\newblock {\em SIAM Review}, 45:167--256, 2003.

\bibitem{Newman04}
M.~E.~J. Newman.
\newblock Fast algorithm for detecting community structure in networks.
\newblock {\em Phys. Rev. E}, 69:066133, 2004.

\bibitem{Newman2006a}
M.~E.~J. Newman.
\newblock Modularity and community structure in networks.
\newblock {\em Proceedings of the National Academy of Sciences},
  103(23):8577--8582, 2006.

\bibitem{LineGraphs03}
Prot Pakoński, Gregor Tanner, and Karol Życzkowski.
\newblock Families of line-graphs and their quantization.
\newblock {\em Journal of Statistical Physics}, 111(5-6):1331--1352, 2003.

\bibitem{Cvetkovic04}
Drago\v s~Cvetkovic, Peter Rowlinson, and Slobodan Simic.
\newblock {\em Spectral Generalizations of Line Graphs}.
\newblock Cambridge University Press, 2004.
\newblock Cambridge Books Online.

\bibitem{SarichNet11}
M.~Sarich, N.~Djurdjevac, S.~Bruckner, T.~O.~F. Conrad, and Ch. Sch\"utte.
\newblock Modularity revisited: A novel dynamics-based concept for decomposing
  complex networks.
\newblock {\em Journal of Computational Dynamics}, 1(1):191--212, 2014.

\bibitem{Schaub2012}
Michael~T. Schaub, Jean-Charles Delvenne, Sophia~N. Yaliraki, and Mauricio
  Barahona.
\newblock Markov dynamics as a zooming lens for multiscale community detection:
  Non clique-like communities and the field-of-view limit.
\newblock {\em PLoS ONE}, 7(2):e32210, 02 2012.

\bibitem{Lambiotte2012}
Michael~T. Schaub, Renaud Lambiotte, and Mauricio Barahona.
\newblock Encoding dynamics for multiscale community detection: Markov time
  sweeping for the map equation.
\newblock {\em Phys. Rev. E}, 86:026112, Aug 2012.

\bibitem{Schnakenberg76}
J.~Schnakenberg.
\newblock Network theory of microscopic and macroscopic behavior of master
  equation systems.
\newblock {\em Rev. Mod. Phys.}, 48:571--585, Oct 1976.

\bibitem{Rosvall11}
Alcides Viamontes~Esquivel and Martin Rosvall.
\newblock Compression of flow can reveal overlapping-module organization in
  networks.
\newblock {\em Phys. Rev. X}, 1:021025, Dec 2011.

\bibitem{Zia07}
R.~K.~P. Zia and B.~Schmittmann.
\newblock {Probability currents as principal characteristics in the statistical
  mechanics of non-equilibrium steady states}.
\newblock {\em Journal of Statistical Mechanics-theory and Experiment}, 7,
  2007.

\end{thebibliography}
%
%
%
\end{document}